\newcommand{\deriv}[3][]{\frac{\d^{#1}{#2}}{{\d{#3}}^{#1}}}
\newcommand{\pderiv}[3][]{\frac{\partial^{#1}{#2}}{{\partial{#3}}^{#1}}}
\def\O{\mathcal{O}}
\def\p{Pain\-lev\'e}
\def\bk{B\"ack\-lund}
\def\bts{B\"ack\-lund transformations}
\newcommand{\comment}[1]{}
\def\d{{\rm d}}
\def\i{\ifmmode{\rm i}\else\char"10\fi}
\newcommand{\Integer}{\mathbb{Z}}
\def\Z{\Integer}
\def\ode{ordinary differential equation}
\def\odes{ordinary differential equations}
\def\bts{B\"ack\-lund transformations}
\def\peq{\p\ equation}
\def\peqs{\p\ equations}
\def\PI{\mbox{\rm P$_{\rm I}$}}
\def\PII{\mbox{\rm P$_{\rm II}$}}
\def\PIII{\mbox{\rm P$_{\rm III}$}}
\def\PIV{\mbox{\rm P$_{\rm IV}$}}
\def\PV{\mbox{\rm P$_{\rm V}$}}
\def\PVI{\mbox{\rm P$_{\rm VI}$}}
\def\HII{\mathcal{H}_{\rm II}} 
\def\SII{\mbox{\rm S$_{\rm II}$}}
\def\Ptf{$\mbox{\rm P}_{34}$}
\def\dPI{\mbox{\rm dP$_{\rm I}$}}
\def\det{\mathop{\rm det}\nolimits}
\def\Tr{\mathop{\rm Tr}\nolimits}
\def\ds{\displaystyle}
\def\Ai{\mathop{\rm Ai}\nolimits}
\def\Bi{\mathop{\rm Bi}\nolimits}
\newcommand{\BesselJ}[1]{J_{#1}}
\newcommand{\BesselY}[1]{Y_{#1}}
\newcommand{\WhitD}[1]{D_{#1}}
\newcommand{\WhitM}[2]{M_{#1,#2}}
\newcommand{\WhitW}[2]{q_{#1,#2}}
\newcommand{\KummerM}{M}
\newcommand{\KummerU}{U}
\newcommand{\HyperpFq}[2]{{}_{#1}F_{#2}}
\newcommand{\JacobiP}[3]{P^{(#1,#2)}_{#3}}
\newcommand{\HermiteH}[1]{H_{#1}}
\newcommand{\LaguerreL}[2]{L^{(#1)}_{#2}}
\def\a{\alpha}
\def\b{\beta}
\def\k{\kappa}
\def\th{\vartheta}
\def\ep{\varepsilon}
\newtheorem{theorem}{Theorem}[section]
\newtheorem{lemma}[theorem]{Lemma}
\theoremstyle{definition}
\newtheorem{remark}[theorem]{Remark}
\newtheorem{conjecture}[theorem]{Conjecture}
\numberwithin{figure}{section}
\numberwithin{equation}{section}
\numberwithin{table}{section}
\def\sref#1{\S\ref{#1}}
\definecolor{dkg}{rgb}{0,0.5,0}
\definecolor{purple}{rgb}{0.5,0,0.7}
\def\blue#1{\textcolor{blue}{#1}}
\def\green#1{\textcolor{dkg}{#1}}
\def\purple#1{\textcolor{purple}{#1}}
\def\red#1{\textcolor{red}{#1}}
\def\fig#1{\includegraphics[width=5cm]{#1}}
\def\figg#1#2{\includegraphics[width=#1]{#2}}
\begin{document}

\title{On Airy Solutions of the Second Painlev\'e Equation}
\author{Peter A.\ Clarkson\\ School of Mathematics, Statistics and Actuarial Science,\\
University of Kent, Canterbury, CT2 7NF, UK\\ 
\texttt{P.A.Clarkson@kent.ac.uk}}

\maketitle

\begin{abstract}
In this paper we discuss Airy  solutions of the second Painlev\'{e} equation (\PII) and two related equations, the Painlev\'{e} XXXIV equation (\Ptf) and the Jimbo-Miwa-Okamoto $\sigma$ form of \PII\ (\SII), are discussed. It is shown that solutions which depend only on the Airy function $\Ai(z)$ have a completely difference structure to those which involve a linear combination of the Airy functions $\Ai(z)$ and $\Bi(z)$. For all three equations, the special solutions which depend only on $\Ai(t)$ are \textit{tronqu\'{e}e} solutions, i.e.\ they have no poles in a sector of the complex plane. Further for  both \Ptf\ and \SII, it is shown that amongst these \textit{tronqu\'{e}e} solutions there is a family of solutions which have no poles on the real axis.
\end{abstract}

\begin{center}\textit{Dedicated to Mark Ablowitz on his 70th birthday}\end{center}

\section{Introduction}
The six \peqs\ (\PI--\PVI) were first discovered by \p, Gambier and their colleagues in an investigation of which second order \odes\ of the form 
\begin{equation} \label{eq:PT.INT.gen-ode} \deriv[2]{q}{z}=F\left(\deriv{q}{z},q,z\right),  \end{equation} 
where $F$ is rational in $dq/dz$ and $q$ and analytic in $z$, have the property that their solutions have no movable branch points. They showed that there were fifty canonical equations of the form (\ref{eq:PT.INT.gen-ode}) with this property, now known as the \textit{\p\ property}. Further \p, Gambier and their colleagues showed that of these fifty equations, forty-four can be reduced to linear equations, solved in terms of elliptic functions, or are reducible to one of six new nonlinear \odes\ that define new transcendental functions, see Ince \cite{refInce}. The \p\ equations can be thought of as nonlinear analogues of the classical special functions \cite{refPAC05review,refFIKN,refGLS02,refIKSY,refUmemura98}, and arise in a wide variety of applications, for example random matrices, cf.~\cite{refForrester,refOsipovKanz}.

In this paper we are concerned with special solutions of the second \peq\ (\PII)
\begin{equation}\label{eq:PT.DE.PII}
\deriv[2]{q}{z} = 2q^3 + z q + \a,
\end{equation}
with 
$\a$ an arbitrary constant, and two related equations. These are the \p\ XXXIV equation (\Ptf)
\begin{equation}\label{eq:PT.DE.P34}
\deriv[2]{p}{z} =\frac1{2p}\left(\deriv{p}{z}\right)^{\!2} 
+ 2p^2-zp-\frac{(\a+\tfrac12)^2}{2p},
\end{equation} 
is equivalent to equation XXXIV of Chapter 14 in \cite{refInce}, which is solvable in terms of \PII\ (\ref{eq:PT.DE.PII}), see \S\ref{ssec:Ham}, and  the 
Jimbo-Miwa-Okamoto $\sigma$ form of \PII\ (\SII)
\begin{equation}\label{eq:PT.DE.SII}
\left(\deriv[2]{\sigma}{z}\right)^{\!2}+4\left(\deriv{\sigma}{z}\right)^{\!3}+2\deriv{\sigma}{z}\left(z\deriv{\sigma}{z}-\sigma\right)
=\tfrac14(\a+\tfrac12)^2, \end{equation} 
which is satisfied by the Hamiltonian associated with \PII\ (\ref{eq:PT.DE.PII}), see \S\ref{ssec:Ham}. 
Equation (\ref{eq:PT.DE.SII}) is equation SD-I.d in the classification of second-order, second-degree equations which have the \p\ property by Cosgrove and Scoufis \cite{refCS}, an equation first derived by Chazy \cite{refChazy11}.
Frequently in applications it is the associated second-order, second-degree equation such as (\ref{eq:PT.DE.SII}) which arises rather than the \peq.

It is well-known that \PII\ has special solutions depending on one parameter that are expressed in terms of the Airy functions $\Ai(z)$ and $\Bi(z)$. In this paper we study the Airy solutions for \PII\ (\ref{eq:PT.DE.PII}), \Ptf\ (\ref{eq:PT.DE.P34}) and \SII\ (\ref{eq:PT.DE.SII}), see \S\ref{sec:sf}. In particular it is shown that the solutions which depend only on $\Ai(z)$ have a completely different asymptotic behaviour as $z\to-\infty$ in comparison to those which involve a linear combination of $\Ai(z)$ and $\Bi(z)$, which is a new characterization of these special solutions of \PII\ (\ref{eq:PT.DE.PII}), \Ptf\ (\ref{eq:PT.DE.P34}) and \SII\ (\ref{eq:PT.DE.SII}). Further, the special solutions of the three equations which depend only on $\Ai(t)$ are \textit{tronqu\'{e}e} solutions, i.e.\ they have no poles in a sector of the complex plane. Additionally it is shown that there are new families of Airy solutions of  \Ptf\ (\ref{eq:PT.DE.P34}) and \SII\ (\ref{eq:PT.DE.SII}) which have no poles on the real axis and so are likely to arise in physical applications. 
For \Ptf\ (\ref{eq:PT.DE.P34}) these  solutions have algebraic decay as $z\to\pm\infty$, whilst for \SII\ (\ref{eq:PT.DE.SII}) these solutions have algebraic decay as $z\to +\infty$ and algebraic growth as $z\to +\infty$.

\section{Some properties of the second \p\ equation}
\subsection{Hamiltonian structure}\label{ssec:Ham} Each of the
\peqs\ \PI--\PVI\ can be written as a Hamiltonian system 
\begin{equation}\label{sec:PT.HM.DE1} 
\deriv{q}{z}=\pderiv{\mathcal{H}_{\rm J}}{p},\qquad  
\deriv{p}{z}=-\pderiv{\mathcal{H}_{\rm J}}{q},
\end{equation}
for a suitable Hamiltonian function $\mathcal{H}_{\rm J}(q,p,z)$\ \cite{refJMi,refOkamoto80a,refOkamoto80b,refOkamotoPIIPIV}. 
The function $\sigma(z)\equiv\mathcal{H}_{\rm J}(q,p,z)$ satisfies a second-order, second-degree \ode, whose solution is expressible in terms of the solution of the associated \peq\ \cite{refJMi,refOkamoto80a,refOkamoto80b,refOkamotoPIIPIV}.

The Hamiltonian associated with \PII\ (\ref{eq:PT.DE.PII}) is
\begin{equation}\label{sec:PT.HM.DE4}
\HII(q,p,z;\a) = \tfrac12 p^2 - (q^2+\tfrac12 z)p - (\a+\tfrac12)q
\end{equation} and so
\begin{equation}\label{sec:PT.HM.DE3}
\deriv{q}{z}=p-q^2-\tfrac12z,\qquad \deriv{p}{z}=2qp+\a+\tfrac12
\end{equation} (Jimbo and Miwa \cite{refJMi}, Okamoto \cite{refOkamotoPIIPIV}). 
Eliminating $p$ in (\ref{sec:PT.HM.DE3}) then $q$ satisfies \PII\ (\ref{eq:PT.DE.PII}) whilst eliminating $q$ yields (\ref{eq:PT.DE.P34}).
Further if $q$ satisfies \PII\ (\ref{eq:PT.DE.PII}) then $p=q'+q^2+\tfrac12 z$ satisfies (\ref{eq:PT.DE.P34}). 
Conversely if $p$ satisfies (\ref{eq:PT.DE.P34}) then
$q=(p'-\a-\tfrac12)/(2p)$ satisfies \PII\ (\ref{eq:PT.DE.PII}). Thus there is a one-to-one
correspondence between solutions of \PII\ (\ref{eq:PT.DE.PII}) and those of \Ptf\
(\ref{eq:PT.DE.P34}). 

An important property of the Hamiltonian (\ref{sec:PT.HM.DE4}), which is very useful in applications, is that it satisfies a second-order, second-degree \ode, as discussed in the following theorem.

\begin{theorem}{\label{thm:2.2}Consider the function 
$\sigma(z;\a)=\HII(q,p,z;\a)$ defined by (\ref{sec:PT.HM.DE4}), where $q$ and $p$ satisfy the system (\ref{sec:PT.HM.DE3}), then
$\sigma(z;\a)$ satisfies (\ref{eq:PT.DE.SII}).
Conversely if $\sigma(z;\a)$ is a solution of (\ref{eq:PT.DE.SII}), then 
\begin{equation}\label{sec:PT.HM.DE6}
q(z;\a)=\frac{4\sigma''(z;\a)+2\a+1}{8\sigma'(z;\a)},
\qquad p(z;\a)=-2\sigma'(z;\a),
\end{equation} 
with $'\equiv d/dz$, are solutions of (\ref{eq:PT.DE.PII}) and (\ref{eq:PT.DE.P34}), respectively.}\end{theorem}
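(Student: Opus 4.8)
The plan is to verify the forward direction by direct differentiation, then handle the converse by algebraic inversion. For the forward direction, I would start from $\sigma(z;\a)=\HII(q,p,z;\a)=\tfrac12 p^2-(q^2+\tfrac12z)p-(\a+\tfrac12)q$ and compute $\sigma'$ using the chain rule together with the Hamiltonian equations (\ref{sec:PT.HM.DE3}). Many terms should cancel because $q,p$ solve Hamilton's equations: indeed $\sigma' = p p' - 2qq'p - (q^2+\tfrac12 z)p' - \tfrac12 p - (\a+\tfrac12)q'$, and substituting $p'=2qp+\a+\tfrac12$ and $q'=p-q^2-\tfrac12 z$ should collapse this to the compact expression $\sigma' = -\tfrac12 p$ (equivalently $p=-2\sigma'$, consistent with (\ref{sec:PT.HM.DE6})). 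Then I would differentiate once more, $\sigma''=-\tfrac12 p' = -\tfrac12(2qp+\a+\tfrac12) = -qp-\tfrac12(\a+\tfrac12)$, which gives $q$ in terms of $\sigma',\sigma''$: since $p=-2\sigma'$, we get $q = (2\sigma''+\a+\tfrac12)/(2\cdot 2\sigma') \cdot(\text{sign check})$, matching (\ref{sec:PT.HM.DE6}) after bookkeeping the factors of $4$ and $8$.

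With $p=-2\sigma'$ and $q$ expressed through $\sigma',\sigma''$ in hand, the remaining task is to substitute these into one of the two first-order relations of (\ref{sec:PT.HM.DE3}) — say $q'=p-q^2-\tfrac12 z$ — and clear denominators to obtain a single second-order, second-degree ODE purely in $\sigma$. I expect this to reproduce exactly (\ref{eq:PT.DE.SII}): $(\sigma'')^2+4(\sigma')^3+2\sigma'(z\sigma'-\sigma)=\tfrac14(\a+\tfrac12)^2$. The check that the constant term comes out to $\tfrac14(\a+\tfrac12)^2$ (rather than some other multiple) is the one place where a sign or factor error would show up, so I would carry the $(\a+\tfrac12)$ terms carefully. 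One can also sanity-check by differentiating (\ref{eq:PT.DE.SII}) itself: $d/dz$ of the left side factors as $2\sigma'\bigl(\sigma''' + 6(\sigma')\sigma'' + z\sigma'' + \sigma'\cdot 0\bigr)$-type expression, revealing that any solution with $\sigma'\not\equiv 0$ satisfies a third-order equation that is the derivative of a \PII-type relation — but the cleanest route is the direct substitution just described.

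For the converse, suppose $\sigma$ solves (\ref{eq:PT.DE.SII}); define $p:=-2\sigma'$ and $q:=(4\sigma''+2\a+1)/(8\sigma')$ as in (\ref{sec:PT.HM.DE6}). The strategy is to show this pair satisfies the Hamiltonian system (\ref{sec:PT.HM.DE3}); then by the equivalences established in \S\ref{ssec:Ham} — eliminating $p$ yields \PII\ for $q$, eliminating $q$ yields \Ptf\ for $p$ — the conclusion follows immediately, and moreover $\HII(q,p,z;\a)$ recovers $\sigma$ up to the ambiguity that (\ref{eq:PT.DE.SII}) does not see. Concretely: the definition of $q$ rearranges to $2qp = -q\cdot 4\sigma'/(1) \cdots$; more usefully, $8q\sigma' = 4\sigma''+2\a+1$ so $-4qp = 4\sigma''+2\a+1$, i.e. $p' = -2\sigma'' = 2qp + \a + \tfrac12$, which is the \emph{second} Hamiltonian equation, automatically. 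The \emph{first} Hamiltonian equation $q'=p-q^2-\tfrac12 z$ is then equivalent, after multiplying through by $8(\sigma')^2$ or so, to precisely the degree-two equation (\ref{eq:PT.DE.SII}) — this is the same computation as the forward direction run backwards.

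The main obstacle is purely organizational rather than conceptual: managing the rational expression for $q$ with its $\sigma'$ in the denominator, so that clearing denominators in $q'=p-q^2-\tfrac12 z$ does not introduce spurious factors and genuinely matches (\ref{eq:PT.DE.SII}) coefficient-by-coefficient including the right-hand constant $\tfrac14(\a+\tfrac12)^2$. A secondary subtlety is the locus $\sigma'=0$, where the formula for $q$ in (\ref{sec:PT.HM.DE6}) is singular; there the correspondence should be read as holding generically (away from such points), or equivalently one notes that these are exactly the points where $p=0$ and the \Ptf\ solution has a simple zero, consistent with the pole structure of the corresponding \PII\ solution. I would remark on this briefly but not dwell on it, since the theorem as stated is the standard Jimbo--Miwa--Okamoto correspondence and the generic statement is what is used in the sequel.
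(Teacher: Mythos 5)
Your overall architecture is reasonable and several of your computations are exactly right: $\sigma'=-\tfrac12 p$, hence $p=-2\sigma'$; $\sigma''=-qp-\tfrac12(\a+\tfrac12)$, hence the formula for $q$ in (\ref{sec:PT.HM.DE6}); and, in the converse direction, the observation that the second Hamiltonian equation $p'=2qp+\a+\tfrac12$ holds automatically from the definitions. (The paper itself only cites Jimbo--Miwa and Okamoto, so a direct verification is a legitimate route.) However, the pivotal step as you describe it fails. Substituting $p=-2\sigma'$ and $q=(2\sigma''+\a+\tfrac12)/(4\sigma')$ into $q'=p-q^2-\tfrac12 z$ cannot ``reproduce exactly'' (\ref{eq:PT.DE.SII}): \SII\ contains $\sigma$ undifferentiated (through the term $-2\sigma'\sigma$), whereas the Hamiltonian system (\ref{sec:PT.HM.DE3}) involves only $q$ and $p$, so any substitution into it produces an equation in $\sigma',\sigma'',\sigma'''$ alone. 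Concretely, clearing $4(\sigma')^2$ in that relation gives $2\sigma'\sigma'''-(\sigma'')^2+8(\sigma')^3+2z(\sigma')^2+\tfrac14(\a+\tfrac12)^2=0$, which is precisely \Ptf\ written for $p=-2\sigma'$ (third order in $\sigma$), not \SII. The missing ingredient in the forward direction is to substitute $p=-2\sigma'$ and $q=(2\sigma''+\a+\tfrac12)/(4\sigma')$ back into the \emph{defining relation} $\sigma=\HII(q,p,z;\a)$ itself: multiplying by $2\sigma'$ then yields exactly $(\sigma'')^2+4(\sigma')^3+2\sigma'(z\sigma'-\sigma)=\tfrac14(\a+\tfrac12)^2$.

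Correspondingly, in the converse direction the first Hamiltonian equation is not \emph{equivalent} to \SII, and the argument is not simply ``the forward computation run backwards.'' To show that the pair (\ref{sec:PT.HM.DE6}) satisfies $q'=p-q^2-\tfrac12 z$ (equivalently, that $q$ solves \PII\ and $p$ solves \Ptf) you must differentiate \SII, obtaining $2\sigma''\left[\sigma'''+6(\sigma')^2+2z\sigma'-\sigma\right]=0$, use the bracket to eliminate $\sigma'''$ (generically, where $\sigma''\not\equiv0$), and then invoke \SII\ once more to remove the undifferentiated $\sigma$; only this combination recovers the relation $2\sigma'\sigma'''-(\sigma'')^2+8(\sigma')^3+2z(\sigma')^2+\tfrac14(\a+\tfrac12)^2=0$ and hence the Hamiltonian system. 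With that repair--and with your genericity caveat about $\sigma'=0$ extended to cover the use of $\sigma''\not\equiv0$--the proof closes; as written, though, the claim that clearing denominators in a first-order Hamiltonian relation yields \SII\ coefficient-by-coefficient is a genuine gap, and it also undercuts the justification you give for the converse.
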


\begin{proof}See Jimbo and Miwa \cite{refJMi} and Okamoto \cite{refOkamoto80a,refOkamoto80b,refOkamotoPIIPIV}.\end{proof}

\subsection{\bk\ transformations}\label{sec:bts}
The \peqs\ \PII--\PVI\ possess \emph{\bts}\ which relate one solution to another solution either of the same equation, with different values of the parameters, or another equation (see \cite{refPAC05review,refGLS02,refPACDLMF,refFA82} and the references therein). An important application of the \bts\ is that they generate hierarchies of classical solutions of the \peqs, which are discussed in \sref{sec:sf}.

The \bts\ for \PII\ (\ref{eq:PT.DE.PII}) are given in the following theorem.
\begin{theorem}{Let $q\equiv q(z;\a)$ is a solution of \PII\ (\ref{eq:PT.DE.PII}), then the transformations
\begin{align}\label{eq:PT.BT.P21}
\mathcal{S}:&\qquad q(z;-\a)=-\,q,\\ 
\label{eq:PT.BT.P22}
\mathcal{T}_{\pm}:&\qquad q(z;\a\pm1)=-\,q-\frac{2\a\pm1}{2q^2\pm2q'+z},
\end{align} give solutions of \PII, provided that
$\a\not=\mp\tfrac12$ in (\ref{eq:PT.BT.P22}).}\end{theorem}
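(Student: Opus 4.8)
The plan is to verify the two transformations $\mathcal{S}$ and $\mathcal{T}_{\pm}$ directly by substitution, treating each as an identity that follows from the fact that $q$ solves \PII\ (\ref{eq:PT.DE.PII}). For $\mathcal{S}$ this is essentially immediate: the map $q\mapsto -q$ sends $2q^3+zq+\a$ to $-(2q^3+zq)+\a = -(2q^3+zq-\a)$, and since $(-q)'' = -q''$, the function $\tilde q = -q$ satisfies $\tilde q'' = 2\tilde q^3 + z\tilde q - \a$, which is \PII\ with parameter $-\a$. So the only real work is in $\mathcal{T}_{\pm}$.

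For $\mathcal{T}_{\pm}$ I would exploit the Hamiltonian description in \sref{ssec:Ham} rather than brute-forcing the second-order ODE. Recall from (\ref{sec:PT.HM.DE3}) that if $q$ solves \PII\ then $p = q' + q^2 + \tfrac12 z$ solves \Ptf, and conversely $q = (p' - \a - \tfrac12)/(2p)$. The denominator $2q^2 + 2q' + z$ appearing in (\ref{eq:PT.BT.P22}) is exactly $2p$, so $\mathcal{T}_{+}$ reads $q(z;\a+1) = -q - (2\a+1)/(2p) = -q - (p' - \a - \tfrac12)/p + \cdots$; more cleanly, one checks that $\hat q := -q - (2\a+1)/(2p)$ can be written in terms of the \emph{same} $p$ but with the shifted parameter, using the second equation $p' = 2qp + \a + \tfrac12$ to re-express things. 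The strategy is: (i) introduce $p$ and rewrite $\hat q$ as a rational function of $q$, $p$; (ii) compute $\hat q'$ using (\ref{sec:PT.HM.DE3}) to eliminate $q'$ and $p'$; (iii) show $(\hat q, \hat p)$ with a suitably chosen $\hat p$ again satisfies the Hamiltonian system (\ref{sec:PT.HM.DE3}) but with $\a$ replaced by $\a\pm 1$, whence $\hat q$ solves \PII\ with that shifted parameter. The case $\mathcal{T}_{-}$ is handled the same way (or deduced from $\mathcal{T}_{+}$ by composing with $\mathcal{S}$ and relabelling $\a$).

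Alternatively — and this is the fallback if the Hamiltonian bookkeeping gets unwieldy — one can proceed purely computationally: set $\hat q = -q - (2\a\pm 1)/(2q^2 \pm 2q' + z)$, differentiate twice, and at each stage substitute $q'' = 2q^3 + zq + \a$ (and its derivative $q''' = 6q^2 q' + q + zq'$) to reduce everything to a rational expression in $q$ and $q'$ only. Then $\hat q'' - 2\hat q^3 - z\hat q - (\a\pm 1)$ must be shown to vanish identically as a rational function of $(q,q')$; clearing the common denominator $D_{\pm} = 2q^2 \pm 2q' + z$ (and its powers), the numerator is a polynomial in $q, q'$ that one checks is the zero polynomial. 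The restriction $\a \neq \mp\tfrac12$ enters precisely because at those values $2\a\pm 1 = 0$ and the transformation degenerates (and one should also note $D_{\pm}$ must not vanish identically, which would force $q$ to be a special Riccati-type solution).

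The main obstacle is the middle computation in $\mathcal{T}_{\pm}$: the double differentiation of the rational expression $-(2\a\pm1)/D_{\pm}$ produces terms with $D_{\pm}^2$ and $D_{\pm}^3$ in the denominator and $D_{\pm}'$, $D_{\pm}''$ in the numerator, and the cancellation that leaves a clean \PII\ with shifted $\a$ is not obvious term-by-term — it genuinely relies on repeatedly feeding in the \PII\ relation for $q''$. Organizing this so the algebra stays manageable (ideally by staying in Hamiltonian coordinates $(q,p)$ where $D_{\pm} = 2p$ is a single variable and the equations are first order) is where the care is needed; once the substitution pattern is set up correctly the verification is routine, and indeed this is the classical result going back to the references \cite{refGLS02,refFA82} cited in \sref{sec:bts}.
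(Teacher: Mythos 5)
Your proposal is correct, but note that the paper itself gives no argument at all for this theorem -- its ``proof'' is simply the citation to Gambier's 1909 paper -- so what you offer is a genuinely different, self-contained route. Your treatment of $\mathcal{S}$ is exactly right and immediate. For $\mathcal{T}_{\pm}$ your Hamiltonian strategy is sound and can be streamlined so that almost no computation remains: since \Ptf\ (\ref{eq:PT.DE.P34}) depends on $\a$ only through $(\a+\tfrac12)^2$, the function $p=q'+q^2+\tfrac12 z$ (which solves \Ptf\ when $q$ solves \PII\ with parameter $\a$, by \S\ref{ssec:Ham}) is simultaneously a solution of \Ptf\ with parameter $-\a-1$; feeding it back through the correspondence $q=(p'-\a-\tfrac12)/(2p)$ taken at parameter $-\a-1$, and using $p'=2qp+\a+\tfrac12$ from (\ref{sec:PT.HM.DE3}), shows that $q+(2\a+1)/(2p)$ solves \PII\ with parameter $-\a-1$; composing with $\mathcal{S}$ gives precisely $\mathcal{T}_{+}$, since $2p=2q^2+2q'+z$, and $\mathcal{T}_{-}$ follows by conjugating with $\mathcal{S}$ exactly as you say. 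This also sharpens your remark about the proviso: from $p'=2qp+\a+\tfrac12$, the denominator $2p$ can vanish identically only when $\a=-\tfrac12$ (the Riccati/Airy case), which is exactly the excluded value for $\mathcal{T}_{+}$, and symmetrically $\a=\tfrac12$ for $\mathcal{T}_{-}$. Your brute-force fallback (double differentiation plus repeated substitution of $q''=2q^3+zq+\a$) also works and is the classical verification, just heavier. In short: the paper buys brevity by deferring to the literature, while your argument buys a complete elementary proof, and the $(q,p)$ formulation reduces it to the single observation that \Ptf\ is invariant under $\a\mapsto-\a-1$.
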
 

\begin{proof}See Gambier \cite{refGambier09}.\end{proof}


The solutions $q_{\a}=q(z;\a)$, $q_{\a\pm1}=q(z;\a\pm1)$ also satisfy the nonlinear recurrence relation
\begin{equation}\label{eq:PT.BT.P23}
\frac{2\a+1}{q_{\a+1}+q_{\a}}
+\frac{2\a-1}{q_{\a}+q_{\a-1}} +4q_{\a}^2 + 2z=0,
\end{equation} 
a difference equation which is known as an alternative form of discrete \PI\ (alt-\dPI)\ \cite{refFGR}. The difference equation (\ref{eq:PT.BT.P23}) is obtained by eliminating $q'$ between the transformations $\mathcal{T}_{\pm}$ given by (\ref{eq:PT.BT.P22}). Note that for \PII\ (\ref{eq:PT.DE.PII}), the independent variable $z$ varies and the parameter $\a$ is fixed, whilst for the discrete equation (\ref{eq:PT.BT.P23}), $z$ is a fixed parameter and $\a$ varies.

\section{Special function solutions}\label{sec:sf} 
The \peqs\ \PII--\PVI\ possess hierarchies of solutions expressible in terms of classical special functions, for special values of the parameters through an associated Riccati equation, 
\begin{equation}
\label{eq:PT.SF.eq20} \deriv{q}{z} 
=f_2(z)q^2+f_1(z)q+f_0(z),
\end{equation} 
where $f_2(z)$, $f_1(z)$ and $f_0(z)$ are rational functions. Hierarchies of solutions, which are often referred to as ``one-parameter solutions" (since they have one arbitrary constant), are generated from ``seed solutions'' derived from the Riccati equation using the \bts\ given in \sref{sec:bts}. Furthermore, as for the rational solutions, these special function solutions are often expressed in the form of determinants. 

Solutions of \PII--\PVI\ are expressed in terms of special functions as follows (see \cite{refPAC05review,refGLS02,refPACDLMF,refMasuda04}, and the references therein):
for \PII\ in terms of Airy functions $\Ai(z)$, $\Bi(z)$;
for \PIII\ in terms of Bessel functions $\BesselJ{\nu}(z)$, $\BesselY{\nu}(z)$; 
for \PIV\ in terms of parabolic cylinder functions $\WhitD{\nu}(z)$; 
for \PV\ in terms of confluent hypergeometric functions $\HyperpFq11(a;c;z)$, equivalently Kummer functions $\KummerM(a,b,z)$, $\KummerU(a,b,z)$ or Whittaker functions $\WhitM{\k}{\mu}(z)$, $\WhitW{\k}{\mu}(z)$; 
and  for \PVI\ in terms of hypergeometric functions $\HyperpFq21(a,b;c;z)$. 
Some classical orthogonal polynomials arise as particular cases of these special functions and thus yield rational solutions of the associated \peqs:  
for \PIII\ and \PV\ in terms of associated Laguerre polynomials $\LaguerreL{m}{k}(z)$; 
for \PIV\ in terms of Hermite polynomials $\HermiteH{n}(z)$; and 
for \PVI\ in terms of Jacobi polynomials $\JacobiP{\a}{\b}{n}(z)$. 

\subsection{Second \peq}
We note that \PII\ (\ref{eq:PT.DE.PII}) can be written as
\[\ep \deriv{}{z}\left(\ep \deriv{q}{z}-q^2-\tfrac12z\right) +2q\left(\ep \deriv{q}{z}-q^2-\tfrac12z\right)=\a-\tfrac12\ep,\]
with $\ep^2=1$. Hence if $\a=\tfrac12\ep$, then special solutions of
\PII\ can be obtained in terms of solutions of the Riccati equation 
\begin{equation}\label{eq:PT.OP.eq21}
\ep \deriv{q}{z}=q^2+\tfrac12z.
\end{equation} Any solution of this equation is also a solution of \PII\ (\ref{eq:PT.DE.PII}),
provided that $\a=\tfrac12\ep$. Linearising the Riccati equation (\ref{eq:PT.OP.eq21}) by setting $\ds q=-{\ep}\varphi'/\varphi$
yields
\begin{equation}\label{eq:PT.OP.eq22}
\deriv[2]{\varphi}{z}+\tfrac12z\varphi=0,
\end{equation} which is equivalent to the Airy equation and has general
solution
\begin{equation}\label{eq:PT.OP.eq23}
\varphi(z;\th)=\cos(\th)\Ai(t)+\sin(\th)\Bi(t),\qquad t=-2^{-1/3}\,z,
\end{equation} with $\Ai(t)$ and $\Bi(t)$ the Airy functions and $\th$ an arbitrary constant.
The Airy solutions of \PII\ are classified in the following theorem due to Gambier \cite{refGambier09}.
\begin{theorem}{The second \peq\ (\ref{eq:PT.DE.PII}) has a one-parameter family of
solutions expressible in terms of Airy functions given by
(\ref{eq:PT.OP.eq23}) if and only if $\a=n-\tfrac12$, with $n\in\Integer$.}\end{theorem}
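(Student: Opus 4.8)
The plan is to prove both directions of the equivalence. For the ``if'' direction, suppose $\alpha = n - \tfrac12$ for some $n \in \Integer$. When $n = 0$, so $\alpha = -\tfrac12$, we take $\ep = -1$ in the Riccati equation (\ref{eq:PT.OP.eq21}) and obtain the one-parameter family $q(z;-\tfrac12) = \varphi'/\varphi$ with $\varphi$ as in (\ref{eq:PT.OP.eq23}); similarly $n = 1$, i.e.\ $\alpha = \tfrac12$, is handled with $\ep = +1$. For general $n$, I would start from one of these two seed solutions and iterate the \bts\ $\mathcal{T}_{\pm}$ of (\ref{eq:PT.BT.P22}): since $\mathcal{T}_{+}$ sends $q(z;\alpha)$ to $q(z;\alpha+1)$ and $\mathcal{T}_{-}$ sends it to $q(z;\alpha-1)$, applying these repeatedly moves $\alpha$ through all of $\tfrac12 + \Integer$ (equivalently all $n - \tfrac12$), producing a one-parameter family at each such value. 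One should check that the denominators $2q^2 \pm 2q' + z$ appearing in $\mathcal{T}_{\pm}$ do not vanish identically along the iteration so that the transformation is well defined at each step; using the Riccati relation (\ref{eq:PT.OP.eq21}) at the seed, $2q^2 + z = 2\ep q'$, which is generically nonzero, and this non-degeneracy propagates. This establishes that a one-parameter Airy family exists whenever $\alpha \in \tfrac12 + \Integer$.

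For the ``only if'' direction, suppose \PII\ (\ref{eq:PT.DE.PII}) has a one-parameter family of solutions expressible through (\ref{eq:PT.OP.eq23}). The standard argument here is a reduction-by-\bts\ argument: the \bt\ $\mathcal{S}$ of (\ref{eq:PT.BT.P21}) together with $\mathcal{T}_{\pm}$ generates the affine Weyl group action $\alpha \mapsto \pm\alpha + k$, $k \in \Integer$, on the parameter. If $\alpha \notin \tfrac12 + \Integer$ and $\alpha \notin \Integer$ (the latter giving rational solutions, a separate hierarchy), one can apply a suitable composition of \bts\ to bring the putative Airy solution to one at parameter value $\alpha_0$ with $|\Re(\alpha_0)| < \tfrac12$; at such a value \PII\ admits no Riccati reduction, and one shows a genuinely transcendental (non-special-function) solution cannot be algebraically related by the rational \bts\ to a one-parameter family of Airy type. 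Concretely, if $q(z;\alpha)$ satisfies the first-order Riccati-type constraint implicit in being built from (\ref{eq:PT.OP.eq23}), then substituting the \bt\ formulas shows $q$ itself must satisfy a first-order equation rational in $q, q'$ with coefficients rational in $z$; matching this against \PII\ forces the parameter to be a half-integer. Equivalently, one uses the fact that the seed Riccati equation (\ref{eq:PT.OP.eq21}) is the \emph{only} Riccati equation of the form (\ref{eq:PT.SF.eq20}) consistent with \PII, which pins $\alpha = \tfrac12\ep$ at the base of the hierarchy, and the \bts\ only shift by integers.

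The main obstacle is the ``only if'' direction: showing that \emph{no} one-parameter Airy family can exist for $\alpha \notin \tfrac12 + \Integer$. Existence (the ``if'' part) is essentially a verification plus bookkeeping with the \bts, but non-existence requires ruling out that some cleverly disguised combination of \PII\ solutions happens to reduce to Airy functions. The cleanest route is probably to invoke the structure already available in the excerpt: any such family would, after applying the group of \bts\ generated by $\mathcal{S}$ and $\mathcal{T}_{\pm}$, have to descend to a one-parameter family at a parameter in the fundamental domain $|\Re\alpha|\le \tfrac12$, and then a direct argument on the Riccati equation (\ref{eq:PT.OP.eq21})---it is the unique sub-equation of \PII\ of Riccati type, and it forces $\alpha = \pm\tfrac12$---completes the proof. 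I would also remark that this is Gambier's classical result and cite \cite{refGambier09}, while giving the \bt-iteration argument as the modern proof.
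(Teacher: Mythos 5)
Your ``if'' direction is essentially the paper's own treatment: the paper derives the seed solutions at $\a=\pm\tfrac12$ from the Riccati equation (\ref{eq:PT.OP.eq21}), linearised to the Airy equation (\ref{eq:PT.OP.eq22}) with general solution (\ref{eq:PT.OP.eq23}), and then generates the hierarchy at $\a=n-\tfrac12$ by iterating the \bts\ $\mathcal{T}_{\pm}$ of (\ref{eq:PT.BT.P22}); that matches your construction step for step, and your care about the denominators $2q^2\pm2q'+z$ is a reasonable (minor) point to flag. Note, however, that the paper does not prove the theorem at all: it states the result as due to Gambier and cites \cite{refGambier09}, exactly as you propose to do for the hard half.

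The genuine gap is in your ``only if'' argument, and it is worth naming precisely. The reduction by $\mathcal{S}$ and $\mathcal{T}_{\pm}$ to a parameter in a fundamental strip is fine, and the uniqueness of the Riccati sub-equation is genuinely elementary (substitute (\ref{eq:PT.SF.eq20}) with unknown rational coefficients into \PII\ (\ref{eq:PT.DE.PII}) and match; one is forced to $f_2=\ep$, $f_1=0$, $f_0=\tfrac12\ep z$ and $\a=\tfrac12\ep$). But the pivotal step --- that a one-parameter family of solutions ``expressible in terms of'' (\ref{eq:PT.OP.eq23}) must satisfy a first-order constraint rational in $q$ and $q'$ with coefficients rational in $z$ --- is asserted, not proved, and it is precisely the substantive content of Gambier's classification (in modern language, of the classification of one-parameter families of classical solutions of \PII). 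Your phrase that a ``genuinely transcendental solution cannot be algebraically related by the rational \bts\ to a one-parameter family of Airy type'' presupposes what is to be shown; without an argument that any one-parameter algebraic family of \PII\ solutions is cut out by a first-order equation (which the \p\ property then forces to be of Riccati form), the non-existence for $\a\notin\Z+\tfrac12$ does not follow. So either supply that lemma or, as the paper does, attribute the classification outright to \cite{refGambier09} rather than presenting the sketch as a proof.
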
 


\begin{figure}{\[ \begin{array}{c@{\quad}c@{\quad}c}
\fig{PII_Airy12} &\fig{PII_Airy32}&\fig{PII_Airy52}\\
{n=1}
& {n=2}
& {n=3}\\[10pt]
\fig{PII_Airy72} &\fig{PII_Airy92}&\fig{PII_Airy112}\\
{n=4}
& {n=5}
& {n=6}\end{array}\]
\caption{\label{fig:PIIplots}Plots of $q_n(z;\th)$ (\ref{eq:PT.OP.PII}) for $\th={0}$ [\purple{purple}], $\th={\tfrac16\pi}$ [\blue{blue}], $\th={\tfrac13\pi}$ [\red{red}], $\th={\tfrac12\pi}$ [\green{green}]; the dashed line is the parabola $2q^2+z=0$.}}\end{figure}

{The simplest Airy solutions of \PII\ (\ref{eq:PT.DE.PII}), which arise when $\a=\pm\tfrac12$, are
\begin{equation} q(z) = \mp \deriv{}{z}\ln\varphi(z;\th), \end{equation}
with $\varphi(z;\th)$ given by (\ref{eq:PT.OP.eq23}). Then using the \bts\ (\ref{eq:PT.BT.P22}), a hierarchy of Airy solutions for $\a=n+\tfrac12$, $n\in\Z$, can be generated.}
\begin{table}\[\begin{array}{|c|c|}\hline
n & q_n(z;\th)\\ \hline
1& \phantom{\dfrac{1}{2}} \Phi  \phantom{\dfrac{1}{2}} \\ 
2& -\Phi-\dfrac{1}{2\Phi^2+z}\\[10pt]
3&\dfrac{1}{2\Phi^2+z} - \dfrac{2z\Phi^2-\Phi+z^2}{4\Phi^3+2z\Phi+1}\\[10pt]
4&\dfrac{2z\Phi^2-\Phi+z^2}{4\Phi^3+2z\Phi+1}
-\dfrac{48\Phi^3-8z^2\Phi^2+28z\Phi-4z^3+9}{z(8z\Phi^4-16\Phi^3+8z^2\Phi^2-8z\Phi+2z^3-3)}  -\dfrac{3}{z}\\[10pt] \hline
\end{array}\]
\caption{\label{tab:PII}The Airy solutions $q_n(z;\th)$ of \PII\ (\ref{eq:PT.DE.PII}), see (\ref{eq:PT.OP.PII}), where
$\Phi=- \varphi'(z;\th)/\varphi(z;\th)$, with $\varphi(z;\th)$ given by (\ref{eq:PT.OP.eq23}).}\end{table}
The Airy solutions can also be expressed in terms of determinants, as described in the following theorem.
\begin{theorem}{Let $\tau_n(z;\th)$ be the 
$n\times n$ determinant 
\begin{equation}\label{eq:PT.OP.Ok1}
\tau_n(z;\th) = \left[\deriv[j+k]{}{z}\varphi(z;\th) \right]_{j,k=0}^{n-1},\qquad n\geq1,
\comment{\left|\begin{array}{cccc}
\varphi & \varphi' & \cdots & \varphi^{(n-1)} \\
\varphi' & \varphi'' & \cdots & \varphi^{(n)} \\
\vdots & \vdots & \ddots & \vdots \\
\varphi^{(n-1)} & \varphi^{(n)} & \cdots & \varphi^{(2n-2)} 
\end{array}\right|,}%
\end{equation} 
with $\varphi(z;\th)$ given by (\ref{eq:PT.OP.eq23}) and $\tau_0(z;\th)=1$,
then 
\begin{equation} \label{eq:PT.OP.PII} 
q_n(z;\th)= \deriv{}{z}\ln\frac{\tau_{n-1}(z;\th)}{\tau_n(z;\th)},\qquad n\geq1,
\end{equation} satisfies \PII\ (\ref{eq:PT.DE.PII}) with $\a=n-\tfrac12$. }\end{theorem}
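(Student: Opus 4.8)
The plan is to proceed by induction on $n\ge1$, using the \bt\ $\mathcal{T}_+$ of \eref{eq:PT.BT.P22} to pass from parameter value $\a=n-\tfrac12$ to $\a=n+\tfrac12$. For the base case $n=1$ one has $\tau_0=1$ and $\tau_1=\varphi$, so $q_1=\bigl(\ln(\tau_0/\tau_1)\bigr)'=-\varphi'/\varphi$; since $\varphi$ solves the Airy equation \eref{eq:PT.OP.eq22}, this $q_1$ solves the Riccati equation \eref{eq:PT.OP.eq21} with $\ep=1$, and hence \PII\ with $\a=\tfrac12=n-\tfrac12$, in agreement with the $n=1$ entry of \tref{tab:PII}.

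For the inductive step, suppose $q_n$ defined by \eref{eq:PT.OP.PII} solves \PII\ with $\a=n-\tfrac12$. Since $n\ge1$ we have $\a\ne-\tfrac12$, so the \bt\ $\mathcal{T}_+$ applies and produces the solution $-q_n-2n/(2q_n^2+2q_n'+z)$ of \PII\ with $\a=n+\tfrac12$; it is therefore enough to verify that this coincides with $q_{n+1}$, i.e.\ the identity
\[ q_n+q_{n+1}=-\,\frac{2n}{2q_n^2+2q_n'+z}. \]
Writing $\s_k=\bigl(\ln\tau_k\bigr)'$, so that $q_k=\s_{k-1}-\s_k$ and $q_n+q_{n+1}=\s_{n-1}-\s_{n+1}$, I would deduce this from three bilinear relations for the Hankel (equivalently, Wronskian) determinants \eref{eq:PT.OP.Ok1}: the Toda identity $\tau_n\tau_n''-(\tau_n')^2=\tau_{n+1}\tau_{n-1}$, so that $\s_n'=\tau_{n+1}\tau_{n-1}/\tau_n^2$; the relation $q_n^2+\bigl(\ln\tau_{n-1}\bigr)''+\bigl(\ln\tau_n\bigr)''+\tfrac12 z=0$; and $2\bigl(\tau_{n-1}'\tau_{n+1}-\tau_{n-1}\tau_{n+1}'\bigr)=n\,\tau_n^2$. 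The first two together give $2q_n^2+2q_n'+z=-4\tau_{n+1}\tau_{n-1}/\tau_n^2$, the third gives $\s_{n-1}-\s_{n+1}=n\tau_n^2/\bigl(2\tau_{n+1}\tau_{n-1}\bigr)$, and the displayed identity follows immediately; the same computation exhibits the denominator $2q_n^2+2q_n'+z$ as $-4\tau_{n+1}\tau_{n-1}/\tau_n^2$, which is not identically zero (as is needed for \eref{eq:PT.OP.PII} to make sense at every level), so $\mathcal{T}_+$ genuinely applies.

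It remains to establish the three bilinear relations, and this is where I expect the real work to be. Because each entry of $\tau_n$ is a $z$-derivative of $\varphi$, differentiating $\tau_n$ merely shifts all the Hankel indices by one, so all but the bottom-row contribution cancels; the Toda identity then comes from the Desnanot--Jacobi (Pl\"ucker) relation applied to the associated $(n+1)\times(n+1)$ bordered determinant. The other two relations are where the Airy equation \eref{eq:PT.OP.eq22} enters essentially: substituting $\varphi''=-\tfrac12 z\varphi$ into the relevant cofactor expansions is what generates the terms carrying the explicit $z$ and the integer $n$. An alternative to this induction is to prove directly that $\s_n=\bigl(\ln\tau_n\bigr)'$ solves \SII\ \eref{eq:PT.DE.SII} with right-hand side $\tfrac14 n^2$ and then invoke Theorem~\ref{thm:2.2}, which concentrates the difficulty into that single second-order, second-degree equation for $\s_n$ together with the elementary check $2\s_n''+n=4\s_n'(\s_{n-1}-\s_n)$ matching the formula of Theorem~\ref{thm:2.2} with \eref{eq:PT.OP.PII}. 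Either way, the hard part is pinning down the bilinear identities for $\tau_n$ — in particular tracking the coefficient that becomes the $2n$ in $\mathcal{T}_+$ — rather than the inductive bookkeeping, which is routine once those identities are in hand.
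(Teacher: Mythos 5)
Your reduction of the theorem to the three bilinear identities is sound, and the bookkeeping is correct: the base case $q_1=-\varphi'/\varphi$ does follow from the Airy equation \eref{eq:PT.OP.eq22} via the Riccati equation \eref{eq:PT.OP.eq21}; relations (i)--(ii) give $2q_n^2+2q_n'+z=-4\tau_{n+1}\tau_{n-1}/\tau_n^2$, relation (iii) gives $\s_{n-1}-\s_{n+1}=n\tau_n^2/(2\tau_{n-1}\tau_{n+1})$, and together these show that the right-hand side of $\mathcal{T}_+$ in \eref{eq:PT.BT.P22} (with $2\a+1=2n$) is exactly $q_{n+1}$ as defined by \eref{eq:PT.OP.PII}; your identities also check out at low order (e.g.\ $2(\tau_1'\tau_3-\tau_1\tau_3')=2\tau_2^2$). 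Note that the paper itself offers no argument at this point: it simply cites Flaschka--Newell \cite{refFN}, Okamoto \cite{refOkamotoPIIPIV} and Forrester--Witte \cite{refFW01}, whose proofs run along essentially the $\tau$-function/Toda-chain lines you are sketching, so your plan is the standard route.

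The genuine gap is that the substantive content of the theorem has been displaced into identities (ii) and (iii), which you assert but do not prove. The Desnanot--Jacobi argument you invoke legitimately yields the Toda relation $\tau_n\tau_n''-(\tau_n')^2=\tau_{n+1}\tau_{n-1}$, since differentiating the Hankel determinant \eref{eq:PT.OP.Ok1} only shifts the last row; but it cannot by itself produce either the explicit $\tfrac12 z$ in (ii) or the integer coefficient $n$ in $2(\tau_{n-1}'\tau_{n+1}-\tau_{n-1}\tau_{n+1}')=n\tau_n^2$. Those require a genuine Wronskian-technique computation in which $\varphi''=-\tfrac12 z\varphi$ is applied repeatedly across rows and the resulting row-degeneracies are tracked (indeed (ii) is equivalent to the determinant formula \eref{eq:PT.OP.P34} for \Ptf, and (iii) carries the same weight as the recurrence \eref{eq:PT.BT.P23}), so saying that ``substituting the Airy equation into the relevant cofactor expansions'' generates these terms names the method without carrying it out --- and that is precisely where the proof lives. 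A secondary, smaller omission: you need $\tau_n\not\equiv0$ for every $n$ (both for \eref{eq:PT.OP.PII} to define a function and for $\mathcal{T}_+$ to apply), and this is asserted parenthetically rather than established; it does follow, for instance, from the leading behaviour of $\tau_n$ as $z\to-\infty$ or from the Toda relation plus an induction, but some argument must be supplied.
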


\begin{proof}See 
Flaschka and Newell \cite{refFN}, Okamoto \cite{refOkamotoPIIPIV}; also \cite{refFW01}.\end{proof}

We remark that the determinant 
\[ \Delta_n(t)=\det\left[\deriv[j+k]{}{t}\Ai(t)\right]_{j,k=0}^{n-1},\]
which is equivalent to $\tau_n(z;0)$ given by (\ref{eq:PT.OP.Ok1}), arises in random matrix theory, in connection with the Gaussian Unitary Ensemble (GUE) in the soft-edge scaling limit, see e.g.\ \cite[p.~393]{refFW01}.  

If we set $\Phi(z;\th)\equiv - \varphi'(z;\th)/{\varphi(z;\th)}$, with $\varphi(z)$ given by (\ref{eq:PT.OP.eq23}), then the first few solutions in the Airy function solution hierarchy for \PII\ (\ref{eq:PT.DE.PII}) are given in Table \ref{tab:PII}. We note that $\Phi(z;\th)$ satisfies the Riccati equation (\ref{eq:PT.OP.eq21}) with $\ep=1$.

Plots of the solutions $q_n(z;\th)$ (\ref{eq:PT.OP.PII}) 
for various $\th$ are given in Figure \ref{fig:PIIplots}. These plots show that the asymptotic behaviour as $z\to-\infty$ of the solutions is completely different in the case when $\th=0$ compared to the case when $\th\not=0$, see Theorem \ref{thm33}.
\comment{\begin{lemma}\label{lem42} If $q_n(z;\th)$ is given by (\ref{eq:PT.OP.PII}), then
\begin{equation} q_n(z;\th)=\begin{cases} -\tfrac12\sqrt {2}\,(-z)^{1/2}+ \O\big(z^{-1}\big), &\quad{\rm if}\quad\th=0,\\
\tfrac12\sqrt {2}\,(-z)^{1/2}+ \O\big(z^{-1}\big), &\quad{\rm if}\quad\th\not=0.\end{cases}\end{equation}
\end{lemma}
\begin{proof}See\end{proof}}
Fornberg and Weideman \cite[Figure 3]{refFW} plot the locations of the poles for the solution $q_3(z;\th)$ of \PII\ (\ref{eq:PT.DE.PII}), i.e.\ when $\a=\tfrac52$, for various choices of $\th$, which show that the pole structure of the solutions is significantly different in the case when $\th=0$ compared to the case when $\th\not=0$. 
The solution $q_3(z;0)$, which depends only on $\Ai(t)$, is a \textit{tronqu\'{e}e} solution, i.e.\ it has no poles in a sector of the complex plane; such solutions play an important role in the theory of \p\ equations. 
Numerical calculations suggest that the special solutions $q_n(z;0)$ are \textit{tronqu\'{e}e} solutions, see Figure \ref{fig:PIIpoles}. 

\begin{figure}
\[\begin{array}{c@{\quad}c@{\quad}c}
\fig{PII_Ai12}&\fig{PII_Ai23}&\fig{PII_Ai34}\\
n=2 & n=3 & n=4
\end{array}\]
\caption{\label{fig:PIIpoles}Plots of the poles of $q_n(z;0)$ and $\sigma_n(z;0)$ for $n=2,3,4$; the blue and red circles represent poles with residues $+1$ and $-1$, respectively.}\end{figure}

\comment{\begin{figure}{\[ \begin{array}{c@{\quad}c@{\quad}c}
\fig{PIIq1asymp} &\fig{PIIq1asymp}&\fig{PIIq1asymp}\\
{n=1}
& {n=2}
& {n=3}\end{array}\]
\caption{\label{fig:PIIplots}Plots of $q_n(z;\th)$ (\ref{eq:PT.OP.PII}) for $\th={0}$ [\purple{purple}], $\th={\tfrac1{1000}\pi}$ [\green{green}] $\th={\tfrac1{100}\pi}$ [\blue{blue}], $\th={\tfrac1{10}\pi}$ [\red{red}], $\th={\tfrac12\pi}$ [black]; the dashed line is the parabola $2q^2+z=0$.}}\end{figure}}%

When $\th=0$ the solution $q_n(z;0)$ involves only on the Airy function $\Ai(t)$, with $t=-2^{-1/3}z$, whereas the solution $q_n(z;\th)$, with $\th\not=0$, involves the Airy function $\Bi(t)$. The known asymptotics of $\Ai(t)$ and $\Bi(t)$ as $t\to\infty$ are
\begin{subequations}\label{asymp:AiBi}\begin{align}
\Ai(t)&=\tfrac12 {\pi}^{-1/2}\,t^{-1/4}\exp\big(-\tfrac23t^{3/2}\big)\left\{1+ \O\big(t^{-3/2}\big)\right\},\\
\Bi(t)&={\pi}^{-1/2}\,t^{-1/4}\exp\big(\tfrac23t^{3/2}\big)\left\{1+ \O\big(t^{-3/2}\big)\right\},
\end{align}\end{subequations} cf.~\cite[\S9.7(ii)]{refDLMF}. Consequently the asymptotic behaviour of $q_n(z;\th)$ as $z\to-\infty$ critically depends on whether it involves $\Bi(t)$.

The asymptotic behaviour of the Airy solutions $q_n(z;\th)$ as $z\to-\infty$ is given in the following theorem.
\begin{theorem}{\label{thm33}Let $q_n(z;\th)$ be defined by (\ref{eq:PT.OP.PII}), then as $z\to-\infty$, 
\begin{align*}
q_n(z;\th)&=\begin{cases}
\ds-\frac{(-z)^{1/2}}{\sqrt {2}}+\frac {2n-1}{4z}
+{\frac {12n^2-12n+5}{16\sqrt {2}\,(-z)^{5/2}}}+\O\left(z^{-4}\right),&\mbox{if}\enskip \th=0,\\
\ds\frac{(-z)^{1/2}}{\sqrt {2}}+\frac {2n-1}{4z}-{\frac {12n^2-12n+5}{16\sqrt {2}\,(-z)^{5/2}}}
+\O\left(z^{-4}\right),&\mbox{if}\enskip \th\not=0.\end{cases}
\end{align*}
}\end{theorem}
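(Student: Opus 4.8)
The plan is to reduce everything to the known asymptotics of the Airy functions via the explicit determinantal/logarithmic-derivative formula (\ref{eq:PT.OP.PII}) and the Riccati seed (\ref{eq:PT.OP.eq21}). First I would treat the seed case $n=1$ directly: here $q_1(z;\th)=\Phi(z;\th)=-\varphi'(z;\th)/\varphi(z;\th)$ with $\varphi$ given by (\ref{eq:PT.OP.eq23}), and $t=-2^{-1/3}z\to+\infty$ as $z\to-\infty$. When $\th=0$ one has $\varphi=\Ai(t)$, and substituting the asymptotic expansion $\Ai(t)=\tfrac12\pi^{-1/2}t^{-1/4}\exp(-\tfrac23t^{3/2})\{1+\O(t^{-3/2})\}$ (together with one more term, which is needed to reach $\O(z^{-4})$ accuracy) and differentiating term-by-term gives $\Phi = -(\d/\d z)\ln\varphi$. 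The dominant contribution comes from differentiating $-\tfrac23 t^{3/2}$, which produces $-t^{1/2}\cdot(\d t/\d z)=+2^{-1/3}t^{1/2}=(-z/2)^{1/2}$ up to sign; a short computation fixes the sign as $-(-z)^{1/2}/\sqrt2$ and yields the correction terms. When $\th\neq0$, $\varphi$ is dominated by the $\Bi(t)$ term, whose exponential has the opposite sign $+\tfrac23 t^{3/2}$, flipping the sign of the leading term to $+(-z)^{1/2}/\sqrt2$; because $\Ai(t)$ is exponentially subdominant, its presence does not affect the algebraic asymptotic expansion, which is why the $\th\neq0$ answer is independent of $\th$.

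Next I would pass from $n=1$ to general $n$ by induction using the B\"acklund transformation. The cleanest route is to use the recurrence (\ref{eq:PT.BT.P23}) (alt-\dPI) together with (\ref{eq:PT.BT.P22}): writing $q_n\equiv q(z;n-\tfrac12)$, the transformation $\mathcal T_{+}$ gives $q_{n+1} = -q_n - (2n)/(2q_n^2+2q_n'+z)$, i.e. $q_{n+1}+q_n = -n/(q_n^2+q_n'+\tfrac12 z)$. Note the denominator is exactly the quantity $p_n = q_n'+q_n^2+\tfrac12 z$ appearing in \sref{ssec:Ham}, so this reads $q_{n+1}+q_n=-n/p_n$. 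From the $n=1$ base case and the ansatz $q_n(z;\th)=\sigma_1(-z)^{1/2}+a_n/z+\sigma_2 b_n/(-z)^{5/2}+\O(z^{-4})$, with $\sigma_1=\mp1/\sqrt2$ and $\sigma_2=\mp1/(16\sqrt2)$ according as $\th=0$ or $\th\neq0$, I would substitute into $q_{n+1}+q_n=-n/(q_n^2+q_n'+\tfrac12 z)$. The leading $(-z)$ terms in $q_n^2$ cancel against $\tfrac12 z$ (since $\sigma_1^2=\tfrac12$), leaving $q_n^2+q_n'+\tfrac12 z = c_1(-z)^{-1}+\dots$ for an explicit constant, and matching orders forces the recursions $a_{n+1} = -a_n + (\text{const})$ and similarly for $b_n$; solving these with $a_1=\tfrac14$, $b_1=\tfrac5{16}$ reproduces $a_n=(2n-1)/4$ and $b_n=12n^2-12n+5$. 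The self-consistency of the ansatz (that the same $\sigma_1,\sigma_2$ work at level $n+1$) is automatic because $-n/p_n$ has the right order in $z$.

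The main obstacle is bookkeeping rather than conceptual: one must carry enough terms in the Airy expansion and in the algebraic ansatz to pin down the coefficient of $(-z)^{-5/2}$ while being sure the error is genuinely $\O(z^{-4})$ and not $\O(z^{-7/2})$. This forces a careful check that all half-integer powers between $(-z)^{1/2}$ and $(-z)^{-5/2}$ (namely $(-z)^{-1/2}$ and $(-z)^{-3/2}$) have vanishing coefficients, which in turn relies on the precise form of the Airy recurrence coefficients; an analogous parity/vanishing statement must then be propagated through the induction step. I would handle this by working with the logarithmic derivative $\Phi$ directly, for which the expansion takes the form $\Phi(z;\th)=\mp(-z)^{1/2}/\sqrt2 + \sum_{k\ge1} c_k (-z)^{-(4k-3)/2}$ with only those powers appearing (a consequence of the structure of the Airy asymptotic series), and verifying that property once at $n=1$ then showing the recurrence $q_{n+1}+q_n=-n/(q_n^2+q_n'+\tfrac12 z)$ preserves it. A secondary subtlety is the $\th\neq0$ case: one should note that the subdominant $\Ai(t)$ contributes a term of size $\exp(-\tfrac43 t^{3/2})$ relative to the leading behaviour, which is beyond all algebraic orders, so it is legitimately dropped; the only place $\th$ could enter the algebraic expansion is through $\tan\th$, and it does not. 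Finally, for the cases $n\le0$ one uses the symmetry $\mathcal S$ from (\ref{eq:PT.BT.P21}) (or runs the recurrence backwards via $\mathcal T_{-}$), noting that the stated formulas are already invariant under $n\mapsto 1-n$, which provides a useful consistency check on the computation.
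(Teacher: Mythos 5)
Your route --- derive the $n=1$ case from the Airy asymptotics (\ref{asymp:AiBi}) and propagate it through the hierarchy via the B\"acklund transformation (\ref{eq:PT.BT.P22}), written as $q_{n+1}+q_n=-n/p_n$ with $p_n=q_n'+q_n^2+\tfrac12 z$ --- is in substance the argument the paper relies on: its proof of Theorem \ref{thm33} consists of invoking (\ref{asymp:AiBi}) and deferring the details to Clarkson, Loureiro and Van Assche \cite{refCLvA}. So the strategy is sound, but several intermediate claims in your sketch are false as stated and must be repaired before it is a proof.

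First, the power structure you assign to $\Phi$ is wrong: the exponents $-(4k-3)/2$ give $(-z)^{-1/2},(-z)^{-5/2},\dots$, which includes a $(-z)^{-1/2}$ term (absent) and omits the $1/(4z)$ term that is certainly present (the $n=1$ instance of $(2n-1)/(4z)$). Since $\Ai'/\Ai$ and $\Bi'/\Bi$ behave as $\mp t^{1/2}$ times a series in $t^{-3/2}$, the correct statement is that the expansion runs through the powers $(-z)^{(1-3k)/2}$, $k\ge0$, i.e.\ $(-z)^{1/2},\,z^{-1},\,(-z)^{-5/2},\,z^{-4},\dots$, and this is exactly why the error is $\O(z^{-4})$ rather than $\O(z^{-7/2})$. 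Second, in the induction step the denominator is not $\O\bigl((-z)^{-1}\bigr)$: after the cancellation $\sigma_1^2(-z)+\tfrac12 z=0$ one finds $p_n=-\sigma_1\bigl(2a_n+\tfrac12\bigr)(-z)^{-1/2}+\cdots=-\sigma_1 n\,(-z)^{-1/2}+\cdots$, consistent with Theorem \ref{P34SIIasymp1}, and it is precisely this order that makes $-n/p_n=2\sigma_1(-z)^{1/2}+n z^{-1}+\cdots$ comparable with $q_{n+1}+q_n$; with the order you state the matching would break down. Third, the signs: in the theorem the $(-z)^{-5/2}$ coefficient carries the sign opposite to the leading term, so $\sigma_2=\pm1/(16\sqrt2)$ (not $\mp$), and with that normalisation $b_1=5$, not $\tfrac5{16}$. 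Fourth, the stated formulas are not invariant under $n\mapsto1-n$ (the term $(2n-1)/(4z)$ is odd under it), so this provides neither a consistency check nor a treatment of $n\le0$; none is needed, since (\ref{eq:PT.OP.PII}) defines $q_n$ only for $n\ge1$. Finally, the induction uses $q_n'$, so you must justify differentiating the asymptotic expansion term by term (for instance because the Airy expansions hold uniformly in a sector, or by using the Hamiltonian system (\ref{sec:PT.HM.DE3}) to express the derivative algebraically); at present that step is only asserted.
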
 

\begin{proof}
These are proved using the asymptotics (\ref{asymp:AiBi}) of $\Ai(t)$ and $\Bi(t)$ as $t\to\infty$;
for details, see Clarkson, Loureiro, and Van Assche \cite{refCLvA}.\end{proof}

The plots in Figure \ref{fig:PIIplots} suggest the following conjecture (see also \cite{refCLvA}).
\begin{conjecture}{\label{thm34}If $q_n(z;0)$ is defined by (\ref{eq:PT.OP.PII}), then for $z<0$ and $n\geq1$, $q_n(z;0)$ is a monotonically decreasing function and \[q_{n+1}(z;0)<q_n(z;0).\]}\end{conjecture}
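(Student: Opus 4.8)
The plan is to pass to the associated \Ptf\ solutions, where both assertions become monotonicity statements about a single sequence, and then to induct on $n$. Let $p_n\equiv p_n(z;0)$ be the \Ptf\ solution attached to $q_n\equiv q_n(z;0)$ by \S\ref{ssec:Ham}, so that $p_n=q_n'+q_n^2+\tfrac12z$ and $p_0\equiv0$ (the case $\a=-\tfrac12$). Three facts drive the reduction: $p_n'=2q_np_n+n$ from the Hamiltonian system (\ref{sec:PT.HM.DE3}); $q_{n+1}+q_n=-n/p_n$ from the \bt\ (\ref{eq:PT.BT.P22}) (since $2q_n^2+2q_n'+z=2p_n$); and $p_n+p_{n-1}=2q_n^2+z$ from alt-\dPI\ (\ref{eq:PT.BT.P23}). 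Eliminating between these gives
\[ q_n'=\tfrac12\bigl(p_n-p_{n-1}\bigr),\qquad q_{n+1}(z;0)-q_n(z;0)=-\,p_n'/p_n , \]
so that on $z<0$ the conjecture is \emph{equivalent} to: for every $n\ge1$, $p_n>0$, $p_n$ is strictly monotone in $z$, and $p_n>p_{n-1}$. Positivity then makes $q_n=(p_n'-n)/(2p_n)$ pole-free on $z<0$.

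The case $n=1$ is elementary. Here $q_1=\Phi=2^{-1/3}\Ai'(t)/\Ai(t)$, $t=-2^{-1/3}z$, solves the Riccati equation (\ref{eq:PT.OP.eq21}), and with $w(t)=-\Ai'(t)/\Ai(t)>0$ (so $w'=w^2-t$) one has $p_1=2^{1/3}(w^2-t)$ and $p_1'=1-2w(w^2-t)$. The required inequalities are thus $w(t)>\sqrt t$ (whence $p_1>0=p_0$ and $q_1'=\tfrac12p_1>0$) and $2w(t)\bigl(w(t)^2-t\bigr)<1$ (whence $p_1'>0$), for all $t>0$. Each of the two defects $w-\sqrt t$ and $1-2w(w^2-t)$ is positive at $t=0^{+}$ (as $\Ai'(0)/\Ai(0)<0$) and tends to $0^{+}$ as $t\to+\infty$ by the expansion (\ref{asymp:AiBi}); a one-line computation with $w'=w^2-t$ shows that at any of its zeros the derivative of either defect is strictly negative, so neither defect can return to positive values once it vanishes — hence neither vanishes.

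The inductive step is the crux and the source of the difficulty. Granting the package for $p_1,\dots,p_n$, the relation $q_{n+1}=-\bigl(p_n'+n\bigr)/(2p_n)$ together with $p_{n+1}+p_n=2q_{n+1}^2+z$ yields the explicit first-order recurrence
\[ p_{n+1}=\frac{(p_n'+n)^2}{2p_n^2}-p_n+z , \]
and one must deduce $p_{n+1}>0$, $p_{n+1}$ monotone, and $p_{n+1}>p_n$ on $z<0$. The obstruction is that, via the identities above, these three statements are \emph{mutually equivalent rewritings of one another}, so the induction hypothesis alone does not close them: an analytic input pinning $p_{n+1}$ down globally is needed. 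I would supply it by a confinement argument for \Ptf: the function $p_{n+1}$ solves (\ref{eq:PT.DE.P34}) with $\a+\tfrac12=n+1$ and, through Theorem \ref{thm33} and $p_{n+1}=q_{n+1}'+q_{n+1}^2+\tfrac12z$, it is the \textit{tronqu\'{e}e} branch with $p_{n+1}\to0^{+}$ as $z\to-\infty$; since the term $-(n+1)^2/(2p)$ in (\ref{eq:PT.DE.P34}) is repulsive as $p\downarrow0$, one expects a region of the shape $\{\,z<0,\ 0<p_{n+1},\ p_{n+1}'>0\,\}$ (perhaps with a further constraint coupling $p_{n+1}$ to $p_n$) to be forward-invariant for the \Ptf\ flow; checking that the vector field points inward along its boundary, and that the solution enters the region as $z\to-\infty$, would give the required inequalities. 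The genuinely hard point — precisely why the conjecture is not immediate — is excluding an interior crossing of $p_n$ and $p_{n-1}$ (equivalently, a stationary point of $q_n(z;0)$) on $(-\infty,0)$, which the asymptotics control only near the two ends.

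Two alternative routes seem worth attempting. The $\tau$-function route uses $p_n=-2(\ln\tau_n)''=-2\,\tau_{n+1}\tau_{n-1}/\tau_n^2$, with $\tau_n(z;0)$ as in (\ref{eq:PT.OP.Ok1}) (a standard consequence of the bilinear/Toda structure), which reduces everything to the sign law $\sgn\tau_n(z;0)=(-1)^{\lfloor n/2\rfloor}$ on $z<0$ — a statement about the Hankel determinants $\det\!\bigl[\Ai^{(j+k)}(t)\bigr]_{j,k=0}^{n-1}$ for $t>0$ — after which the monotonicities follow from logarithmic derivatives of the ratios $\tau_{n-1}/\tau_n$. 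The orthogonal-polynomial route realises $p_n$, up to a positive constant, as a recurrence coefficient of the polynomials orthogonal with respect to the Airy-type weight on the relevant ray for $z<0$, and extracts positivity and monotonicity from the string (Laguerre--Freud) equations — the mechanism by which Clarkson, Loureiro and Van Assche \cite{refCLvA} establish the unique positive solution of alt-\dPI. Finally, the weaker claim that each individual $q_n(z;0)$ is monotone on $z<0$ may be within reach via a phase-plane analysis of \PII\ alone (a suitable $z$-dependent region of the $(q,q')$-plane being forward-invariant and entered as $z\to-\infty$ by Theorem \ref{thm33}); but the ordering $q_{n+1}(z;0)<q_n(z;0)$ genuinely couples two members of the hierarchy and appears to need the \bt\ identity $q_{n+1}-q_n=-(\ln p_n)'$.
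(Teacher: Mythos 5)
This statement is a \emph{conjecture} in the paper: it is put forward on the strength of the plots in Figure \ref{fig:PIIplots} together with a pointer to \cite{refCLvA}, and no proof is given, so there is no argument of the paper to measure yours against. Judged on its own terms, your submission is a programme rather than a proof, as you yourself say. The parts that are solid: the reduction to the \Ptf\ quantities $p_n=q_n'+q_n^2+\tfrac12z$ is correct --- from (\ref{sec:PT.HM.DE3}), (\ref{eq:PT.BT.P22}) and (\ref{eq:PT.BT.P23}) one does get $p_n'=2q_np_n+n$, $q_{n+1}+q_n=-n/p_n$ and $p_n+p_{n-1}=2q_n^2+z$, hence $q_n'=\tfrac12(p_n-p_{n-1})$ and $q_{n+1}-q_n=-(\ln p_n)'$, so the conjecture (read, as intended, with ``decreasing'' meaning decreasing as $z$ decreases) is equivalent, modulo pole-freeness, to $p_n>0$, $p_n'>0$ and $p_n>p_{n-1}$ on $z<0$; and your $n=1$ case via $w=-\Ai'/\Ai$, $w'=w^2-t$, with the sign-of-derivative-at-a-zero argument for the two defects, is sound.

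The genuine gap is the inductive step, which is the entire content of the conjecture. Your proposed closure is a forward-invariance argument for \Ptf, but the region $\{p>0,\ p'>0\}$ is not invariant as stated: on the boundary piece $p'=0$, $p>0$, equation (\ref{eq:PT.DE.P34}) with $\a+\tfrac12=n+1$ gives $p''=2p^2-zp-(n+1)^2/(2p)$, which is negative whenever $p$ is small, so trajectories can exit there; moreover the term $-(n+1)^2/(2p)$ is not a barrier at $p=0$, since multiplying (\ref{eq:PT.DE.P34}) by $2p$ shows a real solution can vanish, crossing transversally with $p'=\pm(n+1)$ at the zero. Thus positivity and ordering of $p_{n+1}$ on all of $(-\infty,0)$ is a global, connection-type statement that local phase-plane reasoning plus the $z\to-\infty$ asymptotics of Theorem \ref{thm33} cannot deliver without the additional coupling to $p_n$ that you leave unspecified --- precisely the ``interior crossing'' you admit you cannot exclude. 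The alternative routes you mention (the sign law $\sgn\tau_n(z;0)=(-1)^{\lfloor n/2\rfloor}$ for the Airy Hankel determinants, and the string-equation/orthogonal-polynomial mechanism of \cite{refCLvA}) are named but not carried out, and the first is itself an unproven statement equivalent to the positivity you need. So: a correct and potentially useful reformulation and base case, but the statement remains, as in the paper, open.
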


\subsection{\p\ XXXIV equation and Jimbo-Miwa-Okamoto $\sigma$ equation}
Due to the one-to-one correspondence between solutions of \PII\ (\ref{eq:PT.DE.PII}) and those of \Ptf\
(\ref{eq:PT.DE.P34}) and \SII\ (\ref{eq:PT.DE.SII}), as discussed in \S\ref{ssec:Ham}, we have the following result.

\begin{theorem}{Equations (\ref{eq:PT.DE.P34}) and (\ref{eq:PT.DE.SII}) possess
one-parameter family of solutions expressible in terms of Airy functions given by (\ref{eq:PT.OP.eq23}) if and only if
$\a=n-\tfrac12$, with $n\in\Integer$.}\end{theorem}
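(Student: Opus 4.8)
The plan is to reduce this statement entirely to the theorem of Gambier classifying the Airy solutions of \PII\ \eref{eq:PT.DE.PII}, using the one-to-one correspondence between solutions of \PII, \Ptf\ \eref{eq:PT.DE.P34} and \SII\ \eref{eq:PT.DE.SII} recorded in \sref{ssec:Ham} and Theorem \ref{thm:2.2}. Recall from the Hamiltonian system \eref{sec:PT.HM.DE3} that a solution $q$ of \PII\ corresponds to the solution $p=q'+q^2+\tfrac12z$ of \Ptf\ and to the solution $\sigma=\HII(q,p,z;\a)$ of \SII, while conversely $p\mapsto q=(p'-\a-\tfrac12)/(2p)$ recovers a solution of \PII\ and \eref{sec:PT.HM.DE6} recovers $q$ and $p$ from $\sigma$. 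The crucial point is that every one of these maps is rational in the unknown and finitely many of its derivatives, and that differentiation does not enlarge the class of Airy functions: if $\varphi$ is as in \eref{eq:PT.OP.eq23} then $\varphi''=-\tfrac12z\varphi$ by \eref{eq:PT.OP.eq22}, so $\Com(z)(\varphi,\varphi')$ is a differential field, and any function rational over $\Com(z)$ in $\varphi$ and $\varphi'$ has all of its derivatives again rational over $\Com(z)$ in $\varphi$ and $\varphi'$. Hence the property ``expressible in terms of the Airy functions \eref{eq:PT.OP.eq23}'' is transported in both directions by the correspondence.

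Granting this, both directions follow quickly. For the ``if'' direction, assume $\a=n-\tfrac12$, $n\in\Z$; by the \PII\ Airy theorem there is a one-parameter family $q(z;\th)$ of solutions of \PII\ built from $\varphi(z;\th)$, and then $p=q'+q^2+\tfrac12z$ and $\sigma=\HII(q,p,z;\a)$ are solutions of \Ptf\ and \SII\ that, by the observation above, are again expressible via \eref{eq:PT.OP.eq23}, and which genuinely depend on $\th$ since the maps are injective on solution sets. (Concretely one may start from the Riccati seed $q=-\ep\varphi'/\varphi$ with $\a=\tfrac12\ep$, so that \eref{eq:PT.OP.eq21} gives $p=(1+\ep)\big(q^2+\tfrac12z\big)$ and a corresponding $\sigma$, and then push these up the hierarchy with the \bts\ \eref{eq:PT.BT.P22}.) Conversely, if \Ptf\ or \SII\ admits, for some $\a$, a one-parameter family of solutions expressible via \eref{eq:PT.OP.eq23}, then applying $p\mapsto q=(p'-\a-\tfrac12)/(2p)$, respectively \eref{sec:PT.HM.DE6}, produces a one-parameter family of Airy solutions of \PII\ with the same value of $\a$; the \PII\ Airy theorem then forces $\a=n-\tfrac12$ with $n\in\Z$.

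I expect the only genuinely delicate point to be the degenerate value $\a=-\tfrac12$ (that is, $n=0$), where the Riccati solution $q=\varphi'/\varphi$ of \PII\ satisfies $q'=-q^2-\tfrac12z$ and is therefore sent to the constant $p\equiv0$ of \Ptf\ and $\sigma\equiv0$ of \SII; there the inverse maps $p\mapsto q$ and \eref{sec:PT.HM.DE6} degenerate (their denominators $p$ and $\sigma'=-\tfrac12p$ vanish identically), so one must argue separately that the asserted family is still present, as a limiting, $\th$-independent solution, rather than collapsing. Away from this value the remaining work is pure bookkeeping: one checks that $p$ and $\sigma'$ are not identically zero on the Airy locus, which follows from the explicit formulae together with the fact that $p\equiv0$ cannot solve \Ptf\ for $\a\neq-\tfrac12$. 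All of the substance rests on the \PII\ theorem and the Hamiltonian correspondence, both already available.
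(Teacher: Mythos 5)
Your proposal is correct and takes essentially the same route as the paper, which states this theorem without a separate proof, deducing it immediately from the one-to-one correspondence between solutions of \PII, \Ptf\ and \SII\ recorded in \S\ref{ssec:Ham} and Theorem \ref{thm:2.2}, combined with Gambier's classification of the Airy solutions of \PII. Your additional care about the rational (differential-field) nature of the maps and about the degenerate value $\a=-\tfrac12$, where $p\equiv0$ and the inverse maps break down, goes beyond what the paper records but does not change the approach.
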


\begin{figure}{\[ \begin{array}{c@{\quad}c@{\quad}c}
\fig{P34_Airy1} &\fig{P34_Airy3} &\fig{P34_Airy5}\\
{n=1}
& {n=3}
& {n=5}\\[10pt]
\fig{P34_Airy2} &\fig{P34_Airy4}&\fig{P34_Airy6}\\
{n=2}
& {n=4}
& {n=6}\end{array}\]
\caption{\label{fig:P34plots1}Plots of $p_n(z;\th)/n$, with $p_n(z;\th)$ given by (\ref{eq:PT.OP.P34}), for $\th={0}$ [\purple{purple}], $\th={\tfrac16\pi}$ [\blue{blue}], $\th={\tfrac13\pi}$ [\red{red}], $\th={\tfrac12\pi}$ [\green{green}].}}\end{figure}
\begin{table}\[\begin{array}{|c|c|}\hline
n & p_n(z;\th) \\ \hline
1& \phantom{\dfrac{1}{2}}  2\Phi^2+z \phantom{\dfrac{1}{2}} \\
2&\dfrac{4\Phi}{2\Phi^2+z}+\dfrac{2}{(2\Phi^2+z)^2}\\[10pt]
3&z-\dfrac{8\Phi^2-2z^2\Phi+6z}{4\Phi^3+2z\Phi+1}+\dfrac{2(2z^2+1)\Phi^2-2z^2\Phi+2z(z^3-1)}{(4\Phi^3+2z\Phi+1)^2}\\[10pt]\hline
\end{array}\]
\caption{\label{tab:P34}The Airy solutions $p_n(z;\th)$ of \Ptf\ (\ref{eq:PT.DE.P34}) given by (\ref{eq:PT.OP.P34}).}\end{table}

\begin{figure}{\[ \begin{array}{c@{\quad}c@{\quad}c}
\fig{SII_Airy1} &\fig{SII_Airy2} & \fig{SII_Airy3}\\
{n=1}
& {n=2}
&{n=3}\\[10pt]
\fig{SII_Airy4} &\fig{SII_Airy6}&\fig{SII_Airy8}\\
{n=4}
& {n=6}
& {n=8}\end{array}\]
\caption{\label{fig:SIIplots1}Plots of $\sigma_n(z;\th)/n$, with $\sigma_n(z;\th)$ given by (\ref{eq:PT.OP.SII}),  for $\th={0}$ [\purple{purple}], $\th={\tfrac16\pi}$ [\blue{blue}], $\th={\tfrac13\pi}$ [\red{red}], $\th={\tfrac12\pi}$ [\green{green}]; the dashed line is the parabola $2\sigma^2+z=0$.}}\end{figure}
\begin{table}\[\begin{array}{|c|c|}\hline
n & \sigma_n(z;\th)\\ \hline
1& \phantom{\dfrac{1}{2}}  -\Phi \phantom{\dfrac{1}{2}} \\
2&\dfrac{1}{2\Phi^2+z}\\[10pt]
3&\dfrac{2z\Phi^2-\Phi+z^2}{4\Phi^3+2z\Phi+1}\\[10pt]
4&\dfrac{48\Phi^3-8z^2\Phi^2+28z\Phi-4z^3+9}{z(8z\Phi^4-16\Phi^3+8z^2\Phi^2-8z\Phi+2z^3-3)}  +\dfrac{3}{z}\\[10pt]\hline
\end{array}\]
\caption{\label{tab:SII}The Airy solutions $\sigma_n(z;\th)$ (\ref{eq:PT.OP.SII}).}\end{table}

As for the Airy solutions of \PII\ (\ref{eq:PT.DE.PII}), the Airy solutions of \Ptf\ (\ref{eq:PT.DE.P34}) and \SII\ (\ref{eq:PT.DE.SII}) can be expressed in terms of the determinant (\ref{eq:PT.OP.Ok1}).
\begin{theorem}{Let $\tau_n(z;\th)$ be the determinant (\ref{eq:PT.OP.Ok1})
for $n\geq1$, with $\varphi(z;\th)$ given by (\ref{eq:PT.OP.eq23}) and $\tau_0(z;\th)=1$,
then 
\begin{align} \label{eq:PT.OP.P34} 
p_n(z;\th)&= -2\deriv[2]{}{z}\ln{\tau_n(z;\th)},\\
\label{eq:PT.OP.SII} 
\sigma_n(z;\th)&= \deriv{}{z}\ln{\tau_n(z;\th)},
\end{align} 
satisfy \Ptf\ (\ref{eq:PT.DE.P34}) and \SII\ (\ref{eq:PT.DE.SII}) with $\a=n-\tfrac12$, respectively. }\end{theorem}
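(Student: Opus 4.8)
The plan is to reduce the statement to the determinant formula (\ref{eq:PT.OP.PII}) for \PII\ proved just above, combined with the Hamiltonian correspondence of Theorem~\ref{thm:2.2}, and then to close the remaining gap with a pair of bilinear identities for the Hankel determinant (\ref{eq:PT.OP.Ok1}). Writing $f_m(z;\th)=\ln\tau_m(z;\th)$, we know that $q_n=f_{n-1}'-f_n'$ solves \PII\ (\ref{eq:PT.DE.PII}) with $\a=n-\tfrac12$. I would set $p_n:=q_n'+q_n^2+\tfrac12 z$; differentiating this identity and using \PII\ gives $p_n'=2q_np_n+\a+\tfrac12$, so that $(q_n,p_n)$ satisfies the Hamiltonian system (\ref{sec:PT.HM.DE3}). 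Hence, by the remarks in \S\ref{ssec:Ham} and by Theorem~\ref{thm:2.2}, $p_n$ solves \Ptf\ (\ref{eq:PT.DE.P34}) and $\sigma_n:=\HII(q_n,p_n,z;n-\tfrac12)$ solves \SII\ (\ref{eq:PT.DE.SII}), both with $\a=n-\tfrac12$. It then remains only to identify $p_n$ and $\sigma_n$ with the logarithmic derivatives of $\tau_n$ appearing in (\ref{eq:PT.OP.P34}) and (\ref{eq:PT.OP.SII}).

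The crucial ingredient is the bilinear structure of the $\tau_n$. Since $\varphi(z;\th)$ satisfies the linear equation (\ref{eq:PT.OP.eq22}), the Leibniz rule gives the entry recurrence $\varphi^{(j+2)}=-\tfrac12\big(z\,\varphi^{(j)}+j\,\varphi^{(j-1)}\big)$, which lets one reduce the last row and column of $\tau_{n+1}$. Feeding this into the Desnanot--Jacobi (Dodgson condensation) identity for $\tau_{n+1}\tau_{n-1}$, I expect to obtain the Hirota bilinear relations
\begin{equation*}
\tau_n\tau_n''-(\tau_n')^2=c_n\,\tau_{n-1}\tau_{n+1},\qquad
\tau_{n-1}''\,\tau_n-2\tau_{n-1}'\,\tau_n'+\tau_{n-1}\,\tau_n''+\tfrac12 z\,\tau_{n-1}\tau_n=0,
\end{equation*}
valid for all $n\geq1$ (with $\tau_0=1$), the $c_n$ being constants independent of $z$; these are the bilinear forms of the Toda equation and of its associated linear problem, and one checks $c_1=1$ directly. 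Dividing the second relation by $\tau_{n-1}\tau_n$ and rewriting in terms of the $f_m$ yields $f_{n-1}''+f_n''+(f_{n-1}'-f_n')^2+\tfrac12 z=0$; since its left-hand side is exactly $q_n'+q_n^2+\tfrac12 z+2f_n''$, this gives $p_n=q_n'+q_n^2+\tfrac12 z=-2f_n''=-2\,\deriv[2]{}{z}\ln\tau_n(z;\th)$, which is (\ref{eq:PT.OP.P34}).

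For $\sigma_n$ I would invoke Hamilton's equations: along the solution $(q_n,p_n)$ one has $\sigma_n'=\partial\HII/\partial z=-\tfrac12 p_n=\deriv[2]{}{z}\ln\tau_n(z;\th)$, so $\sigma_n-\deriv{}{z}\ln\tau_n$ is constant. That this constant vanishes I would verify in the base case $n=1$, where $\sigma_1=\HII(\Phi,2\Phi^2+z,z;\tfrac12)=-\Phi=\varphi'/\varphi=(\ln\tau_1)'$ (in agreement with Table~\ref{tab:SII}), and then propagate along the hierarchy using the \bts\ (\ref{eq:PT.BT.P22}); alternatively one matches the known asymptotics (\ref{asymp:AiBi}) as $z\to-\infty$. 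This establishes (\ref{eq:PT.OP.SII}). A more computational alternative would be to substitute $\sigma_n=\deriv{}{z}\ln\tau_n$ directly into \SII\ (\ref{eq:PT.DE.SII}) and collapse the resulting expression using the two bilinear relations above, which simultaneously pins down the parameter as $\a=n-\tfrac12$.

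The main obstacle is the derivation of the bilinear identities with the correct constants and signs: the Desnanot--Jacobi step produces products of corner minors of $\tau_{n+1}$, and one has to show, using the entry recurrence from (\ref{eq:PT.OP.eq22}), that in the second relation these collapse precisely to $-2\tau_{n-1}'\tau_n'$ and $\tfrac12 z\,\tau_{n-1}\tau_n$. Such determinantal (Toda-lattice) identities for Hankel determinants built from solutions of a second-order linear equation are classical in the Painlev\'e and orthogonal-polynomial literature (see Okamoto \cite{refOkamotoPIIPIV} and Forrester and Witte \cite{refFW01}), so in practice one may simply invoke them; the remaining calculations, together with the bookkeeping of the integration constant for $\sigma_n$, are then routine.
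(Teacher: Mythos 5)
Your overall route is sound and, in substance, it is the route the paper leaves implicit: the paper states this theorem without any proof, relying on the Hamiltonian correspondence of \S\ref{ssec:Ham} (Theorem \ref{thm:2.2}) together with the determinant theory cited for the \PII\ case (Flaschka--Newell, Okamoto, Forrester--Witte), and your proposal is a correct fleshing-out of exactly that. The pieces you do write out check: $(q_n,p_n)$ with $p_n=q_n'+q_n^2+\tfrac12z$ does satisfy (\ref{sec:PT.HM.DE3}) with $\a=n-\tfrac12$; your second bilinear relation, divided by $\tau_{n-1}\tau_n$, is indeed equivalent to $q_n'+q_n^2+\tfrac12z=-2(\ln\tau_n)''$ once $q_n=\deriv{}{z}\ln(\tau_{n-1}/\tau_n)$ is granted, and it reduces for $n=1$ to (\ref{eq:PT.OP.eq22}) (I also checked $n=2$ directly); Hamilton's equation gives $\sigma_n'=-\tfrac12p_n=(\ln\tau_n)''$; and the base case $\sigma_1=-\Phi=(\ln\tau_1)'$ is right. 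Invoking Okamoto \cite{refOkamotoPIIPIV} and Forrester--Witte \cite{refFW01} for the Hirota/Toda identities is legitimate and is no weaker than what the paper itself does.

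The one step that is not closed as written is the determination of the integration constant for $n\geq2$: you show $\sigma_n-(\ln\tau_n)'$ is constant (in $z$, for each fixed $\th$), but the two suggested ways of killing it are only gestures. Propagation ``along the hierarchy using the \bts'' requires knowing how $\mathcal{T}_{+}$ acts on the Hamiltonian, i.e.\ a proof of $\sigma_n=\sigma_{n-1}-q_n$, which is itself an Okamoto-type identity of the same standing as the relations you are trying to establish; and the asymptotic-matching alternative needs an expansion of $(\ln\tau_n)'$ as $z\to-\infty$ accurate to $o(1)$, obtained independently of the theorem (note the constant could a priori depend on $\th$, so the matching must be done for $\th=0$ and $\th\neq0$ separately; fortunately neither expansion contains a constant term). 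Either route can be completed, and note that SII is not shift-invariant in $\sigma$, so this constant genuinely matters; as it stands this is a loose end to tighten rather than a wrong turn. A purely computational alternative is to verify $2(\ln\tau_{n-1})''(\ln\tau_n)''=(n-1)(\ln\tau_n)'-n(\ln\tau_{n-1})'$, another classical bilinear relation, which pins $\sigma_n=(\ln\tau_n)'$ directly from $\sigma_n=-\tfrac12p_n^2+q_n'p_n-nq_n$.
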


Comparing (\ref{eq:PT.OP.PII}) and (\ref{eq:PT.OP.SII}), we see that 
\begin{equation} q_n(z;\th)=\sigma_{n-1}(z;\th)-\sigma_n(z;\th).\end{equation}

The first few Airy solutions of \Ptf\ (\ref{eq:PT.DE.P34}) and \SII\ (\ref{eq:PT.DE.SII}) are given in Tables \ref{tab:P34} and \ref{tab:SII}, respectively, and plots 
in Figures \ref{fig:P34plots1} and \ref{fig:SIIplots1}, respectively.
As was the case for \PII, these plots show that the asymptotic behaviour as $z\to-\infty$ of the Airy solutions is completely different when $\th=0$ compared to the case when $\th\not=0$. 
The solutions $p_n(z;0)$ and $\sigma_n(z;0)$, which depends only on $\Ai(t)$ and have the same poles, are \textit{tronqu\'{e}e} solutions, 
see Figure \ref{fig:P34poles}.
Further we see that the solutions $p_{2n}(z;0)$ and $\sigma_{2n}(z;0)$, for $n\in\Z$, are special in that they have no poles on the real axis, see also Figures \ref{fig:P34plots2} and \ref{fig:SIIplots2}. Additionally these solutions have oscillatory behaviour with algebraic decay as $z\to\infty$ as given in Theorem \ref{P34SIIasymp2}, see also Figures \ref{fig:P34plots3} and \ref{fig:SIIplots3}. 

\begin{figure}
\[\begin{array}{c@{\quad}c@{\quad}c}
\fig{P34_Ai2}&\fig{P34_Ai3}&\fig{P34_Ai4}\\
n=2 & n=3 & n=4
\end{array}\]
\caption{\label{fig:P34poles}Plots of the poles of $p_n(z;0)$ and $\sigma_n(z;0)$ for $n=2,3,4$.}\end{figure}

{\begin{figure}{
\[\begin{array}{c@{\quad}c}
\figg{6cm,height=4.5cm}{p34_Airy24c}&\figg{6cm,height=4.5cm2,}{p34_Airy68c}\\
n=2\ \mbox{\rm [\purple{purple}]}, n=4\ \mbox{\rm [\green{green}]} & n=6\ \mbox{\rm [\blue{blue}]}, n=8\ \mbox{\rm [\red{red}]}
\end{array}\]
\caption{\label{fig:P34plots2}Plots of $p_n(z;0)$ (\ref{eq:PT.OP.P34}), for $n={2},4,6,8$.} 
}\end{figure}}
{\begin{figure}{
\[\begin{array}{c}
\figg{8cm,height=6cm}{SII_Airy2468ab}\\
n=2\ \mbox{\rm [\purple{purple}]}, n=4\ \mbox{\rm [\green{green}]}, n=6\ \mbox{\rm [\blue{blue}]}, n=8\ \mbox{\rm [\red{red}]}
\end{array}\] 
\caption{\label{fig:SIIplots2}Plots of $\sigma_n(z;0)$ (\ref{eq:PT.OP.SII}), for $n=2,4,6,8$.} 
}\end{figure}}

\begin{figure}
\[\begin{array}{c@{\quad}c@{\quad}c}
\fig{p34_Airy4asymp}&\fig{p34_Airy6asymp}&\fig{p34_Airy8asymp}\\
n=4 & n=6 & n=8
\end{array}\]
\caption{\label{fig:P34plots3}Plots of $p_n(z;0)$ (\ref{eq:PT.OP.P34}) [\blue{blue}] and the leading term in the asymptotic expansion (\ref{asympn2}) [\red{red}] for $n=4,6,8$.}\end{figure}

\begin{figure}
\[\begin{array}{c@{\quad}c@{\quad}c}
\fig{SII_Airy4asymp}&\fig{SII_Airy6asymp}&\fig{SII_Airy8asymp}\\
n=4 & n=6 & n=8
\end{array}\]
\caption{\label{fig:SIIplots3}Plots of $\sigma_n(z;0)$ (\ref{eq:PT.OP.SII}) [\blue{blue}] and the leading term in the asymptotic expansion (\ref{asymsigman2}) [\red{red}] for $n=4,6,8$; the dashed line is $n^2/(8z)$.}\end{figure}

The asymptotic behaviour of the Airy solutions $p_n(z;\th)$ and $\sigma_n(z;\th)$  as $z\to-\infty$ is given in the following theorem.

\begin{theorem}{\label{P34SIIasymp1}Let $p_n(z;\th)$ be defined by (\ref{eq:PT.OP.P34}) and $\sigma_n(z;\th)$ by (\ref{eq:PT.OP.SII}), 
then as $z\to-\infty$, 
\begin{align}\label{asympn1}
p_n(z;\th)&=\begin{cases}
\ds \frac{n}{\sqrt{2}\,(-z)^{1/2}}-\frac {n^2}{2z^2}+{\frac {5n(4n^2+1) }{16\sqrt {2}\,(-z)^{7/2}}}
+\O\left(z^{-5}\right), & \mbox{if}\quad  \th=0,\\
\ds -\frac{n}{\sqrt{2}\,(-z)^{1/2}}-\frac{n^2}{2z^2}-{\frac {5n(4n^2+1)}{16\sqrt {2}\,(-z)^{7/2}}}
+\O\left(z^{-5}\right), & \mbox{if}\quad  \th\not=0,
\end{cases}\\ \label{asymsigman1}
\sigma_n(z;\th)&=\begin{cases}
\ds\frac{n(-z)^{1/2}}{\sqrt {2}}-\frac {n^2}{4z}-{\frac {n(4n^2+1) }{16\sqrt {2}\,(-z)^{5/2}}} 
 +\O\left(z^{-4}\right), & \mbox{if}\quad  \th=0,\\
\ds -\frac{n(-z)^{1/2}}{\sqrt {2}}-\frac {n^2}{4z}+{\frac {n(4n^2+1) }{16\sqrt {2}\,(-z)^{5/2}}} 
+\O\left(z^{-4}\right), & \mbox{if}\quad  \th\not=0.
\end{cases}\end{align}
}\end{theorem}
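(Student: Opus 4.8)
The plan is to deduce the whole statement from the asymptotics of $q_n(z;\th)$ in Theorem~\ref{thm33} by exploiting the two elementary identities that the determinantal formulas make available. Since $\tau_0(z;\th)\equiv1$ we have $\sigma_0(z;\th)\equiv0$, and since $q_k=\sigma_{k-1}-\sigma_k$ the sum over $k$ telescopes, giving
\[
\sigma_n(z;\th)=-\sum_{k=1}^{n}q_k(z;\th),
\qquad
p_n(z;\th)=-2\,\sigma_n'(z;\th)=2\sum_{k=1}^{n}q_k'(z;\th),
\]
where $p_n=-2\sigma_n'$ is immediate on comparing (\ref{eq:PT.OP.P34}) and (\ref{eq:PT.OP.SII}). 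Thus (\ref{asymsigman1}) follows by summing the expansion of $q_k$ from Theorem~\ref{thm33} over $k$, and (\ref{asympn1}) follows by additionally differentiating.

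For (\ref{asymsigman1}) I would substitute the two cases of Theorem~\ref{thm33} into $\sigma_n=-\sum_{k=1}^{n}q_k$ and collect powers of $(-z)$. The only arithmetic involved is
\[
\sum_{k=1}^{n}(2k-1)=n^2,
\qquad
\sum_{k=1}^{n}\bigl(12k^2-12k+5\bigr)=4n^3+n=n(4n^2+1),
\]
using $\sum_{k=1}^n k=\tfrac12 n(n+1)$ and $\sum_{k=1}^n k^2=\tfrac16 n(n+1)(2n+1)$. The leading term $\mp(-z)^{1/2}/\sqrt2$ of each $q_k$ produces the leading term $\pm n(-z)^{1/2}/\sqrt2$ of $\sigma_n$ and records the dichotomy $\th=0$ versus $\th\neq0$; the coefficients of $z^{-1}$ and $(-z)^{-5/2}$ come out as $-n^2/4$ and $\mp n(4n^2+1)/(16\sqrt2)$; and the remainder, being a sum of $n$ terms each $\O(z^{-4})$, is $\O(z^{-4})$. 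This is exactly (\ref{asymsigman1}).

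For (\ref{asympn1}) I would differentiate (\ref{asymsigman1}) term by term and use $p_n=-2\sigma_n'$: differentiating $(-z)^{1/2}$, $z^{-1}$ and $(-z)^{-5/2}$ produces precisely the powers $(-z)^{-1/2}$, $z^{-2}$ and $(-z)^{-7/2}$ appearing in (\ref{asympn1}), with the stated coefficients, and the remainder $\O(z^{-4})$ becomes $\O(z^{-5})$. The step I expect to be the main obstacle is justifying this term-by-term differentiation, since Theorem~\ref{thm33} is stated along the negative real axis. The cleanest remedy is to observe that $\varphi(z;\th)$ is entire and that the asymptotics (\ref{asymp:AiBi}) of $\Ai$ and $\Bi$, together with those of their derivatives obtained by formal differentiation, hold uniformly in a sector containing a neighbourhood of the negative real axis; hence the Hankel determinant $\tau_n(z;\th)$, and with it $\sigma_n=(\ln\tau_n)'$ and $p_n=-2(\ln\tau_n)''$, have asymptotic expansions valid in such a sector, which may then be differentiated with the remainder controlled by a Cauchy estimate on a disc of radius proportional to $|z|$. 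Equivalently, one may bypass differentiation by substituting a Puiseux-series ansatz in powers of $(-z)^{-3/2}$ carrying an overall factor $(-z)^{1/2}$ directly into \SII\ (\ref{eq:PT.DE.SII}) and \Ptf\ (\ref{eq:PT.DE.P34}): the leading balance in (\ref{eq:PT.DE.SII}) forces the coefficient $c$ of $(-z)^{1/2}$ in $\sigma_n$ to satisfy $c^2=\tfrac12 n^2$, after which the recursion fixes every higher coefficient uniquely, and the telescoped leading term above selects $c=+n/\sqrt2$ when $\th=0$ and $c=-n/\sqrt2$ when $\th\neq0$. Either route completes the argument.
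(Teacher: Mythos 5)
Your computation is correct, but you take a genuinely different route from the paper. The paper works at a single fixed $n$: it substitutes the expansion of $q_n(z;\th)$ from Theorem \ref{thm33} into the Hamiltonian correspondence of \S\ref{ssec:Ham}, namely that if $q$ solves \PII\ with $\a=n-\tfrac12$ then $p=q'+q^2+\tfrac12 z$ solves (\ref{eq:PT.DE.P34}) and $\sigma=\tfrac12(q')^2-\tfrac12 q^4-\tfrac12 zq^2-(\a+\tfrac12)q-\tfrac18 z^2$ solves (\ref{eq:PT.DE.SII}), and reads off (\ref{asympn1})--(\ref{asymsigman1}) from these algebraic (but nonlinear) formulas. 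You instead telescope the identity $q_k=\sigma_{k-1}-\sigma_k$ with $\sigma_0\equiv0$ to get $\sigma_n=-\sum_{k=1}^{n}q_k$, sum the expansions of Theorem \ref{thm33} over $k$ (your sums $\sum(2k-1)=n^2$ and $\sum(12k^2-12k+5)=n(4n^2+1)$ are right, and for fixed $n$ the $n$ error terms are indeed $\O(z^{-4})$), and then obtain $p_n=-2\sigma_n'$ by differentiation; I have checked that this reproduces the stated coefficients in both cases of $\th$. What each approach buys: yours is linear in the $q_k$ and makes the identification with the determinantal objects completely transparent ($\sigma_n=(\ln\tau_n)'$, $p_n=-2(\ln\tau_n)''$), at the cost of using the expansions for all $k\le n$; the paper's uses only $q_n$ but requires $q_n'$ as well, so the issue of differentiating an asymptotic series that you worry about is present in both proofs (the paper simply defers such details to \cite{refCLvA}), and your explicit justification via sectorial validity of the Airy asymptotics plus a Cauchy-estimate argument is a welcome extra precision. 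One caveat: your second, ``ansatz'' alternative is not self-contained as stated, since substituting a formal Puiseux series into (\ref{eq:PT.DE.SII}) or (\ref{eq:PT.DE.P34}) only yields a formal solution; without first knowing that $p_n$ itself admits an expansion in that scale, with an identified leading coefficient, the formal recursion does not attach to $p_n$, so the sector/differentiation argument is the one to rely on for (\ref{asympn1}).
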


\begin{proof}The asymptotic expansions (\ref{asympn1}) and (\ref{asymsigman1}) are easily derived using the expansion for $q_n(z;\th)$ given in Theorem \ref{thm33} since if $q$ is solution of \PII\ (\ref{eq:PT.DE.PII}), then
\[ \begin{split} p &= \deriv{q}{z}+q^2+\tfrac12z,\\
\sigma &
= \tfrac12\left(\deriv{q}{z}\right)^{\!2}-\tfrac12 q^4-\tfrac12zq^2-(\a+\tfrac12)q-\tfrac18z^2,\end{split}\]
are solutions of \Ptf\ (\ref{eq:PT.DE.P34}) and \SII\ (\ref{eq:PT.DE.SII}), respectively
\end{proof}

We remark that Its, Kuijlaars, and \"{O}stensson \cite{refIKO08} noted that $p_2(z;0)$ is special among all solutions $p_2(z;\th)$ in its behavior as $z\to-\infty$. 

The asymptotic behaviour of the solutions $p_n(z;\th)$ and $\sigma_n(z;\th)$ as $z\to\infty$, for $n\in2\Z$, is given in the following theorem.

\begin{theorem}{\label{P34SIIasymp2}Let $p_n(z;\th)$ be defined by (\ref{eq:PT.OP.P34}) and $\sigma_n(z;\th)$ by (\ref{eq:PT.OP.SII})
and $n\in2\Z$, then as $z\to\infty$, 
\begin{align}\label{asympn2}
p_{n}(z;0)&=\tfrac12n\sqrt{2} \,z^{-1/2}\cos
\left(\tfrac43\sqrt{2}\,z^{3/2}-\tfrac12n\pi\right)+o\big(z^{-1/2}\big),\\
\sigma_{n}(z;0)&=\frac{n}{8z}\left\{n-2 
\sin\left(\tfrac43\sqrt{2}\,z^{3/2}-\tfrac12n\pi\right) \right\}+o\big(z^{-1}\big).\label{asymsigman2}
\end{align}}\end{theorem}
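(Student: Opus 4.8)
The plan is to derive the large-$z$ asymptotics directly from the Wronskian/determinant representation $p_n = -2(\ln\tau_n)''$ and $\sigma_n = (\ln\tau_n)'$, using the known oscillatory asymptotics of $\Ai(t)$ for $t\to-\infty$. Recall $t=-2^{-1/3}z$, so $z\to+\infty$ corresponds to $t\to-\infty$, where $\Ai(t) = \pi^{-1/2}(-t)^{-1/4}\bigl\{\sin(\tfrac23(-t)^{3/2}+\tfrac14\pi)+\O((-t)^{-3/2})\bigr\}$, cf.~\cite[\S9.7(iii)]{refDLMF}. The first step is to establish a tractable closed form for $\tau_n(z;0)$ in this regime. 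Because $\varphi(z;0)=\Ai(t)$ solves the Airy equation $\varphi''=-\tfrac12 z\varphi$, every higher derivative $\varphi^{(j)}$ reduces to a combination of $\varphi$ and $\varphi'$ with polynomial-in-$z$ coefficients; this is precisely the structure that makes the Hankel determinant $\tau_n$ collapse. One clean route is to use the Sylvester-type identity already implicit in the hierarchy, or to invoke the fact (from \cite{refCLvA} / \cite{refFW01}) that $\tau_n$ satisfies the Toda-type recursion $\tau_n''\tau_n - (\tau_n')^2 = \tau_{n+1}\tau_{n-1}$ together with $\sigma_n=(\ln\tau_n)'$ solving \SII. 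Either way, the goal is an expression of the form $\tau_n(z;0) = c_n\,[\Ai(t)]^{a}\,[\Ai'(t)]^{b}\,(1+\text{lower order})$ in the oscillatory regime, with exponents determined by $n$.

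The second step is to extract the asymptotics. Substituting the oscillatory expansion of $\Ai$ and $\Ai'$ into $\ln\tau_n$ and differentiating twice gives $p_n$; differentiating once gives $\sigma_n$. The leading secular (non-oscillatory, algebraically growing) terms in $\ln\tau_n$ will cancel appropriately under differentiation so that $p_n$ decays like $z^{-1/2}$ and $\sigma_n$ like $z^{-1}$ with an oscillatory factor. The phase $\tfrac43\sqrt2\,z^{3/2}$ in the statement is exactly $2\cdot\tfrac23(-t)^{3/2}$ after the change of variable $t=-2^{-1/3}z$ — the factor $2$ signalling that the relevant combination is quadratic in $\Ai$ (e.g.\ $\Ai\,\Ai'$ or $\Ai^2+$const, whose oscillation doubles the Airy phase), while the phase shift $-\tfrac12 n\pi$ accumulates one factor of $\tfrac14\pi\times 2 = \tfrac12\pi$ per "level" in the hierarchy. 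A clean way to organize this is to write $\Ai(t) = \pi^{-1/2}(-t)^{-1/4}\sin\Theta(t)$ with $\Theta = \tfrac23(-t)^{3/2}+\tfrac14\pi$, note $\Ai'(t)\sim -\pi^{-1/2}(-t)^{1/4}\cos\Theta(t)$, and verify by induction on $n$ (even $n$) that $\tau_n$ is, to leading order, a nonzero constant multiple of a single power of $(-t)$ times a $\sin/\cos$ of $2\Theta$ shifted by a multiple of $\tfrac12\pi$; the restriction to $n\in 2\Z$ is what guarantees the leading trigonometric factor is $\sin$ (and hence the pre-factor does not vanish identically, so there are no real poles, consistent with the earlier discussion).

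The third step is to package the result: having $\sigma_n(z;0) = (\ln\tau_n)'$ gives \eqref{asymsigman2} once one tracks that $(\ln\tau_n)'$ has a non-oscillatory part $\tfrac{n^2}{8z}$ (matching the dashed line $n^2/(8z)$ in Figure \ref{fig:SIIplots3}) plus an oscillatory part of size $\tfrac{n}{4z}$; and $p_n = -2\sigma_n'$ (since $p=-2\sigma'$ by \eqref{sec:PT.HM.DE6}) then yields \eqref{asympn2} by differentiating the oscillatory term once — differentiating $\cos(\tfrac43\sqrt2\,z^{3/2}+\cdots)$ brings down a factor $\sim 2\sqrt2\,z^{1/2}$, converting the $z^{-1}$ oscillation of $\sigma_n$ into a $z^{-1/2}$ oscillation of $p_n$, with the sine turning into a cosine (up to sign absorbed in the $-\tfrac12 n\pi$ shift). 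Consistency of the two formulas via $p_n=-2\sigma_n'$ is a useful internal check.

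The main obstacle is Step 1: rigorously pinning down the leading-order form of $\tau_n$ in the oscillatory regime, including that its leading coefficient is nonzero, uniformly enough to justify differentiating the asymptotic expansion twice. Termwise differentiation of asymptotic series is not automatic; one needs either an error bound that survives differentiation (available here because $\Ai$ and its derivatives have genuine asymptotic expansions with controlled remainders, and $\tau_n$ is a fixed polynomial expression in them), or an independent argument — e.g.\ feeding an asymptotic ansatz $\sigma_n = \tfrac{n^2}{8z} + A(z)\sin(\phi(z)) + \cdots$ into \SII\ \eqref{eq:PT.DE.SII} and matching orders to fix $A(z)$, $\phi(z)$ and confirm the $o$-terms. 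I would use the latter (ODE-matching) approach as the rigorous backbone and the determinant computation to identify the constants and the $n$-dependence of the phase shift.
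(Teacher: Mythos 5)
Your overall route --- substituting the oscillatory $t\to-\infty$ asymptotics of $\Ai$ into the log-derivatives of the Hankel determinant $\tau_n$ --- is the same as the paper's (its proof is a short verification along exactly these lines for small $n$). The genuine problem is the structural claim on which your Step 1/2 induction rests. For $\th=0$ and $n$ even, the leading behaviour of $\tau_n(z;0)$ as $z\to+\infty$ is \emph{not} of the form $c_n[\Ai(t)]^{a}[\Ai'(t)]^{b}$, nor ``a power of $(-t)$ times a $\sin/\cos$ of $2\Theta$'': the oscillation cancels at leading order. Already for $n=2$, $\tau_2=\varphi\varphi''-(\varphi')^2=-\tfrac12 z\varphi^2-(\varphi')^2$, and inserting $\Ai(t)\sim\pi^{-1/2}(-t)^{-1/4}\sin\Theta$, $\Ai'(t)\sim-\pi^{-1/2}(-t)^{1/4}\cos\Theta$ gives $\tau_2(z;0)=-2^{-5/6}\pi^{-1}z^{1/2}\{1+\O(z^{-3/2})\}$, the $\sin^2\Theta+\cos^2\Theta$ combination killing the oscillation; the $2\Theta$-oscillation first enters at relative order $z^{-3/2}$, and it is differentiation of that \emph{small} correction that produces the $z^{-1}$ and $z^{-1/2}$ oscillations of $\sigma_n$ and $p_n$. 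Your version --- leading factor equal to a sine of a real phase --- would force $\tau_n$ to have arbitrarily large real zeros, hence real poles of $p_n$ and $\sigma_n$, contradicting precisely the no-real-poles property you invoke to justify it. So the induction target must be reformulated (non-oscillatory leading power of $z$ plus a controlled oscillatory correction) before the rest of the plan can run.

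There is also a concrete error in your phase bookkeeping. You assert that the stated phase $\tfrac43\sqrt2\,z^{3/2}$ ``is exactly'' $2\cdot\tfrac23(-t)^{3/2}$; with $t=-2^{-1/3}z$ one has $2\cdot\tfrac23(-t)^{3/2}=\tfrac23\sqrt2\,z^{3/2}$, half the stated coefficient. Your doubling mechanism is in fact the correct one: carrying out the $n=2$ computation above gives $\sigma_2(z;0)=\tfrac1{2z}\{1+\sin(\tfrac23\sqrt2\,z^{3/2})\}+o(z^{-1})$ and $p_2(z;0)=\sqrt2\,z^{-1/2}\cos(\tfrac23\sqrt2\,z^{3/2}-\pi)+o(z^{-1/2})$, which is also what the quoted Its--Kuijlaars--\"{O}stensson expansion (\ref{asymIKO2}) yields with $\b=1$ under $p=2^{1/3}u_1(t)$; this strongly suggests the coefficient in the printed theorem is a misprint, but you cannot simply declare the two phases equal --- as written, your argument claims to confirm a formula that your own computation does not produce, and you must either correct the phase or flag the discrepancy. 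Finally, the ODE-matching backbone alone only shows that \SII\ admits formal solutions with this behaviour; attaching it to the specific solutions $\sigma_n(z;0)$, including the amplitude and the $-\tfrac12 n\pi$ shift, still requires the determinant/Airy computation (or the IKO connection result), so that part cannot be delegated to the ansatz, as you partly acknowledge.
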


\begin{proof}
Using the known asymptotic expansion of the Airy function as $t\to-\infty$, i.e.
\[\Ai(t)=\pi^{-1/2}(-t)^{-1/4}\sin\big\{\tfrac23(-t)^{3/2}+\tfrac14\pi\big\}+o\big((-t)^{-1/4}\big)\]
\cite[\S9.7(ii)]{refDLMF} it is straightforward to verify the asymptotic expansions (\ref{asympn2}) and (\ref{asymsigman2}) for small values of $n$.
\end{proof}

Plots of $p_n(z;0)$ (\ref{eq:PT.OP.P34}) and the leading term in the asymptotic expansion (\ref{asympn2}) for $n=4,6,8$ are given in Figure \ref{fig:P34plots3}. Analogous plots of $\sigma_n(z;0)$ (\ref{eq:PT.OP.SII}) and the leading term in the asymptotic expansion (\ref{asymsigman2}) for $n=4,6,8$ are given in Figure \ref{fig:SIIplots3}.

\begin{remark}{\rm
Its, Kuijlaars, and \"{O}stensson \cite{refIKO08} discussed solutions of the equation
\begin{equation} u_{\b}\deriv[2]{u_{\b}}{t}=\tfrac12\left(\deriv{u_{\b}}{t}\right)^{\!2}+4u_{\b}^3+2tu_{\b}^2-2\b^2,\label{eq:IKO}\end{equation}
where $\b$ is a constant,
which is equivalent to \Ptf\ (\ref{eq:PT.DE.P34}) through the transformation $p(z)=2^{1/3}u_{\b}(t)$, with $t=-2^{-1/3}z$, and
$\b=\tfrac12\a+\tfrac14$ in their study of the double scaling limit of unitary random
matrix ensembles of the form 
$$Z^{-1}_{n,N} |\det M|^{2\b}\exp\{-N\Tr V (M)\} dM,$$with $\b > -\tfrac12$ and $V$ real analytic. In particular,
\begin{equation} u_1(t)=-\deriv[2]{}{t}\,
\mathcal{W}\big(\Ai(t),\Ai'(t)\big), \label{sol:IKOu1}\end{equation}
with $\mathcal{W}(\varphi_1,\varphi_2)$ the Wronskian,
which is equivalent to the solution $p_2(z;0)$ of \Ptf\ (\ref{eq:PT.DE.P34}), and noted that on the positive real axis, $u_1(t)$ has an infinite number of zeros, which are the zeros of the Airy function $\Ai(t)$, and an infinite number of additional zeros that interlace with the zeros of $\Ai(t)$, see Figure \ref{fig:IKOfig6}. }\end{remark}
\begin{figure}
\[\figg{8cm,height=6cm}{IKOfig6}\]
\caption{\label{fig:IKOfig6}The solution $u_1(t)$ (\ref{sol:IKOu1}) of equation (\ref{eq:IKO}) [\purple{purple}] and the Airy function $\Ai(t)$ [dashed line].}
\end{figure}

\begin{remark}{\rm
Its, Kuijlaars, and \"{O}stensson \cite[Theorem 1.2]{refIKO09} prove that there are solutions 
\comment{$p(z;\a)$ of \Ptf\ (\ref{eq:PT.DE.P34}), for general $\a$, such that as $z\to-\infty$
\begin{equation} p(z;\a) = \ds{\frac { 2\a+1}{2\sqrt {2}\,(-z)^{1/2}}} + \O\big(z^{-2}\big),\label{asymIKO1}\end{equation}
and as $z\to\infty$
\begin{equation} p(z;\a) =\ds{\frac {2\a+1}{2\sqrt {2z}}}\cos \left\{\tfrac43\sqrt{2}\,{z}^{3/2}-\tfrac14(2\a+1)\pi \right\}+ \O\big(z^{-2}\big).\label{asymIKO2}\end{equation} }%
$u_\b(t)$ of (\ref{eq:IKO}) such that as $t\to\infty$
\begin{align}
u_\b(t)&= {\b}t^{-1/2} + \O\big(t^{-2}\big),&&\mbox{as $t\to\infty$}.\label{asymIKO1} \\
u_\b(t)&= {\b}(-t)^{-1/2}\cos \left\{\tfrac43(-t)^{3/2}-\b\pi \right\}+ \O\big(t^{-2}\big),
&&\mbox{as $t\to-\infty$}.\label{asymIKO2}\end{align}
Letting $\b=1$ in (\ref{asymIKO1}) and (\ref{asymIKO2}) shows that they are in agreement with (\ref{asympn1}) and (\ref{asympn2}), for $n=2$. Further Its, Kuijlaars, and \"{O}stensson conclude that solutions of (\ref{eq:IKO}) with asymptotic behaviour (\ref{asymIKO1}) are \textit{tronqu\'{e}e} solutions, i.e.\ they have no poles in a sector of the complex plane.
}\end{remark}

The plots in Figures \ref{fig:P34plots1} and \ref{fig:SIIplots1} suggest the following conjecture, which is analogous to Conjecture \ref{thm34}.

\begin{conjecture}{If $p_n(z;0)$ is defined by (\ref{eq:PT.OP.P34}) and $\sigma_n(z;0)$ by (\ref{eq:PT.OP.SII}), then for $z<0$ and $n\geq1$, $p_n(z;0)$ is a monotonically decreasing function and $\sigma_n(z;0)$ is a monotonically increasing function as $z$ decreases. Further,  for $z<0$
\[p_{n+1}(z;0)<p_n(z;0),\qquad 
\sigma_{n}(z;0)<\sigma_{n+1}(z;0).\]}\end{conjecture}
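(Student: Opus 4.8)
The plan is to route the whole statement, and Conjecture~\ref{thm34} with it, through the \PII\ hierarchy $q_n(z;0)$ and the algebraic identities linking it to $p_n$ and $\sigma_n$. Writing $\sigma_n=(\ln\tau_n)'$, definitions (\ref{eq:PT.OP.P34})--(\ref{eq:PT.OP.SII}) give $p_n=-2\sigma_n'$ and $q_n=\sigma_{n-1}-\sigma_n$, so that $\sigma_n'=-\tfrac12 p_n$, $p_n-p_{n-1}=2q_n'$ and, after $n\mapsto n+1$, $\sigma_{n+1}-\sigma_n=-q_{n+1}$, $p_{n+1}-p_n=2q_{n+1}'$. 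Hence $\sigma_n$ is monotone (decreasing in $z$, i.e.\ increasing as $z$ decreases) precisely when $p_n>0$; the ordering of consecutive $\sigma_n$ is the sign of $q_{n+1}$; and the ordering of consecutive $p_n$ is the sign of $q_{n+1}'$, i.e.\ the monotonicity of $q_{n+1}$ in Conjecture~\ref{thm34}. From the Hamiltonian system (\ref{sec:PT.HM.DE3}) with $\a=n-\tfrac12$, the $p$--$q$ correspondence of \sref{ssec:Ham}, and $\mathcal T_+$ of (\ref{eq:PT.BT.P22}) one also has $p_n=q_n'+q_n^2+\tfrac12 z$, $p_n'=2q_np_n+n$, and $q_{n+1}=-q_n-n/p_n$ (using $2q_n^2+2q_n'+z=2p_n$); combining these, $p_{n-1}=q_n^2-q_n'+\tfrac12 z$, $p_n+p_{n-1}=2q_n^2+z$, $q_np_n=\tfrac12(p_n'-n)$, and so $q_{n+1}p_n=-\tfrac12 p_n'-\tfrac n2$ and $q_{n+1}-q_n=-2q_n-n/p_n$. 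Thus once $p_n>0$ and $p_n'>0$ on $(-\infty,0)$ are known, $q_{n+1}<0$ (whence $\sigma_n<\sigma_{n+1}$) and $q_{n+1}<q_n$ follow immediately, so the real work reduces to (i)~$p_n(z;0)>0$ and (ii)~$p_n'(z;0)>0$ for $z<0$.

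\textbf{Step 1 (positivity and pole-freeness).} I would prove $\tau_n(z;0)\neq0$ and $p_n(z;0)>0$ for $z<0$ by induction on $n$. Since $\varphi(z;0)$ solves the linear equation (\ref{eq:PT.OP.eq22}), the Hankel determinant (\ref{eq:PT.OP.Ok1}) satisfies the Toda/Hirota bilinear identity $\tau_{n+1}\tau_{n-1}=\tau_n\tau_n''-(\tau_n')^2$, whence $p_n=-2(\ln\tau_n)''=-2\tau_{n+1}\tau_{n-1}/\tau_n^2$. For $n=1$: $\tau_1=\varphi(z;0)=\Ai(t)>0$ for $t=-2^{-1/3}z>0$, and $p_1=2\Phi^2+z$ is positive because $v=\Ai'/\Ai$ obeys $v'=t-v^2$ with $v'<0$ on $(0,\infty)$ (a leftmost zero of $v'$ would be a strict local minimum of $v'$, impossible); similarly $p_1'>0$. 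Now suppose $\tau_1,\dots,\tau_n$ are nowhere zero and $p_1,\dots,p_n>0$ on $(-\infty,0)$. Then $\tau_{n+1}=-\tfrac12 p_n\tau_n^2/\tau_{n-1}$ is also nowhere zero on $(-\infty,0)$ (with $\tau_0=1$), so $p_{n+1}=-2(\ln\tau_{n+1})''$ and $q_{n+1}=(\ln(\tau_n/\tau_{n+1}))'$ are real-analytic there; by (\ref{asympn1}), $p_{n+1}(z;0)\to0^+$ as $z\to-\infty$, so any zeros of $p_{n+1}$ in $(-\infty,0)$ are bounded below, and at the leftmost one $z_*$ we would have $p_{n+1}'(z_*)=2q_{n+1}(z_*)p_{n+1}(z_*)+(n+1)=n+1>0$, forcing $p_{n+1}<0$ just to the left of $z_*$ --- a contradiction. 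Hence $p_{n+1}>0$ on $(-\infty,0)$, completing the induction. In particular $\sigma_n'=-\tfrac12 p_n<0$ on $(-\infty,0)$, which is the asserted monotonicity of $\sigma_n$, and $\tau_n(z;0)$ has no real zeros for $z<0$, i.e.\ the negative real axis lies in the \textit{tronqu\'{e}e} sector.

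\textbf{Step 2 (monotonicity of $p_n$ and the orderings).} Given $p_n>0$, the relation $p_n'=2q_np_n+n$ shows that at any critical point $z_0$ of $p_n$ in $(-\infty,0)$ one has $p_n''(z_0)=2q_n'(z_0)p_n(z_0)$; so, once $q_n'(z;0)>0$ there, every critical point of $p_n$ is a strict local minimum, hence there is at most one, and its presence would make $p_n'<0$ on an interval, contradicting $p_n'>0$ as $z\to-\infty$ (from (\ref{asympn1})). Therefore $p_n'>0$ on $(-\infty,0)$. Combined with Step~1 this settles the monotonicity of $p_n$ and $\sigma_n$ and (via the identities of the first paragraph) the orderings $q_{n+1}<0$, i.e.\ $\sigma_n<\sigma_{n+1}$, and $q_{n+1}<q_n$, while $p_{n+1}-p_n=2q_{n+1}'$ is just the monotonicity of $q$ at index $n+1$. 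In short, the whole conjecture --- and Conjecture~\ref{thm34} --- reduces to the single statement $q_n'(z;0)>0$ for $z<0$ and all $n\geq1$.

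\textbf{Main obstacle.} That last monotonicity is the hard part. One does have $q_n'>0$ near $z=-\infty$ (since $p_n-p_{n-1}\sim\tfrac1{\sqrt2}(-z)^{-1/2}>0$ by (\ref{asympn1})), but the critical-point trick used for $p_n$ now fails: at a leftmost zero $z_4$ of $q_n'$ one has $p_n(z_4)=p_{n-1}(z_4)=q_n(z_4)^2+\tfrac12 z_4>0$ and, by \PII, $q_n''(z_4)=2q_n(z_4)p_n(z_4)+(n-\tfrac12)$, where $q_n(z_4)p_n(z_4)$ is negative of asymptotic size $-n/2$, so the sign of $q_n''(z_4)$ is not determined \emph{a priori} and a critical point of $q_n$ need not be a minimum. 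Closing the gap seems to require genuine quantitative control, pursued in one of two ways: either (i)~a simultaneous induction on $n$ carrying sharp two-sided bounds --- e.g.\ $-\sqrt{-z/2}-c_n(-z)^{-1}\le q_n(z;0)\le-\sqrt{-z/2}$ and $0<p_n(z;0)\le C_n(-z)^{-1/2}$ on $(-\infty,0)$ --- strong enough to survive the division by $p_n$ in $q_{n+1}=-q_n-n/p_n$ and the cancellation $p_{n+1}=2q_{n+1}^2+z-p_n$; or (ii)~extending the Riemann--Hilbert analysis of Its, Kuijlaars, and \"{O}stensson~\cite{refIKO08,refIKO09}, who treat $n=2$ (the function $u_1$), to general $n$, so that the uniform asymptotics and the pole distribution --- hence all the monotonicity and interlacing --- come directly from the associated jump matrices. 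Either route would simultaneously prove Conjecture~\ref{thm34}, including the convexity $q_{n+1}<q_n$.
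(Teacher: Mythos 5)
First, note that the paper does not prove this statement at all: it is stated as a conjecture, supported only by the plots of $p_n(z;\th)/n$ and $\sigma_n(z;\th)/n$ in Figures \ref{fig:P34plots1} and \ref{fig:SIIplots1}. So there is no proof in the paper to compare against, and your proposal must stand on its own. As you yourself concede in the ``Main obstacle'' paragraph, it does not: everything in Step 2 (the monotonicity $p_n'>0$, hence the orderings) is conditional on $q_n'(z;0)>0$ on $(-\infty,0)$ for all $n$, which is essentially Conjecture \ref{thm34} and is left unproven. What you have is a clean and, as far as I can check, correct chain of reductions --- $p_n=-2\sigma_n'$, $q_n=\sigma_{n-1}-\sigma_n$, $p_n'=2q_np_n+n$, $q_{n+1}p_n=-\tfrac12 p_n'-\tfrac n2$, $q_{n+1}-q_n=-p_n'/p_n$, together with a plausible induction for $p_n(z;0)>0$ via the leftmost-zero argument and the $z\to-\infty$ asymptotics (\ref{asympn1}) --- but a reduction of one open conjecture to another is not a proof of either.

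There is also a genuine sign inconsistency in your treatment of the ordering $p_{n+1}<p_n$. Your own identity $p_{n+1}-p_n=2q_{n+1}'$, combined with the fact you propose to prove (and need in Step 2), namely $q'(z;0)>0$ on $(-\infty,0)$, yields $p_{n+1}>p_n$, i.e.\ the \emph{reverse} of the conjectured inequality; yet in Step 2 you claim the orderings ``follow'', while in the obstacle paragraph you use $p_n-p_{n-1}\sim\tfrac1{\sqrt2}(-z)^{-1/2}>0$ --- the negation of the ordering you claim to have settled. Indeed, Theorem \ref{P34SIIasymp1} itself gives $p_{n+1}(z;0)-p_n(z;0)\sim\tfrac{1}{\sqrt2}(-z)^{-1/2}>0$ as $z\to-\infty$, so the inequality $p_{n+1}<p_n$ as literally written cannot hold for all $z<0$; the plots behind the conjecture are of the normalized functions $p_n(z;0)/n$, and it is presumably the normalized ordering $p_{n+1}/(n+1)<p_n/n$ (for which (\ref{asympn1}) gives $-\tfrac1{2z^2}+\cdots<0$) that is intended. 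Your identities actually expose this issue, but the write-up papers over it: even if the missing monotonicity lemma $q_n'>0$ were supplied, your argument would prove $\sigma_n<\sigma_{n+1}$ and the monotonicity claims, but would disprove, not prove, the $p$-ordering in the statement as written. A correct treatment must either work with the normalized quantities or address this discrepancy explicitly, and in any case still requires the hard quantitative input (sharp bounds on $q_n(z;0)$, $p_n(z;0)$, or a Riemann--Hilbert analysis extending \cite{refIKO08,refIKO09} to general $n$) that you correctly identify but do not carry out.
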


\section{Conclusion}
In this paper we have studied the Airy solutions of \PII\ (\ref{eq:PT.DE.PII}), \Ptf\ (\ref{eq:PT.DE.P34}), and \SII\ (\ref{eq:PT.DE.PII}). These show that when the solutions depend only on the Airy function $\Ai(t)$, with $t=-2^{-1/3}z$, have a completely different asymptotic behaviour as $z\to-\infty$ compared to solutions which involve the Airy function $\Bi(t)$.  The special solutions which depend only on $\Ai(t)$ are tronqu\'{e}e solutions, i.e.\ they have no poles in a sector of the complex plane. Further for both \Ptf\ (\ref{eq:PT.DE.P34}) and \SII\ (\ref{eq:PT.DE.PII}), it is shown that amongst these tronqu\'{e}e solutions there is a family of solutions which have no poles on the real axis.  Such solutions of \p\ equations often have important physical applications. For example, the well-known Hastings-McLeod solution of \PII\ (\ref{eq:PT.DE.PII}) with $\a=0$ \cite{refHMcL}, which arises in numerous applications, is a tronqu\'{e}e solution with no poles on the real axis, cf.~\cite{refMXZ,refNovok09,refNovok12}.

The special solutions $q_n(z;0)$ and $p_n(z;0)$ of \PII\ and \Ptf\ arise in recent study by Clarkson, Loureiro, and Van Assche \cite{refCLvA} of the discrete system
\begin{align*} 
&a_n+a_{n+1} = b_{n}^2 -t , \\
 &a_n(b_{n}+b_{n-1}) = n, 
\end{align*}
which is equivalent to alt-\dPI\ (\ref{eq:PT.BT.P23}) 
and arises in the study of orthogonal polynomials with respect to an exponential cubic weight 
\cite{refBD,refFilWVALun,refMagnus95}.

\section*{Acknowledgments}
I would like to thank Mark Ablowitz, Andrew Bassom, Andrew Hone, Alexander Its, Kerstin Jordaan, Nalini Joshi, Ana Loureiro, Elizabeth Mansfield, Marta Mazzocco, Bryce McLeod, and Walter Van Assche for their helpful comments and illuminating discussions.

\def\ams{American Mathematical Society}
\def\AAM{Acta Appl. Math.}
\def\ARMA{Arch. Rat. Mech. Anal.}
\def\ampa{Ann. Mat. Pura Appl. (IV)}
\def\cmp{Commun. Math. Phys.}
\def\CMP{Commun. Math. Phys.}
\def\cpam{Commun. Pure Appl. Math.}
\def\CPAM{Commun. Pure Appl. Math.}
\def\CQG{Classical Quantum Grav.}
\def\CSF{Chaos, Solitons \&\ Fractals}
\def\DE{Diff. Eqns.}
\def\DU{Diff. Urav.}
\def\ejam{Europ. J. Appl. Math.}
\def\EJAM{Europ. J. Appl. Math.}
\def\funk{Funkcial. Ekvac.}
\def\FUNK{Funkcial. Ekvac.}
\def\IP{Inverse Problems}
\def\JDE{J. Diff. Eqns.}
\def\JMAA{J. Math. Anal. Appl.}
\def\JMMM{J. Magn. Magn. Mater.}
\def\JMP{J. Math. Phys.}
\def\jmp{J. Math. Phys}
\def\JNMP{J. Nonl. Math. Phys.}
\def\jpa{J. Phys. A}
\def\JPA{J. Phys. A}
\def\JPSJ{J. Phys. Soc. Japan}
\def\LMP{Lett. Math. Phys.}
\def\NL{Nonlinearity}
\def\NMJ{Nagoya Math. J.}
\def\pl{Phys. Lett.}
\def\PL{Phys. Lett.}
\def\PR{Phys. Rev.}
\def\PRL{Phys. Rev. Lett.}
\def\prsl{Proc. R. Soc. Lond. A}
\def\PRSL{Proc. R. Soc. Lond. A}
\def\SAM{Stud. Appl. Math.}
\def\sam{Stud. Appl. Math.}
\def\JCAM{J. Comput. Appl. Math.}

\def\OUP{O.U.P.}
\def\CUP{C.U.P.}
\def\AMS{American Mathematical Society}

\def\refpp#1#2#3#4#5{\vspace{-0.25cm}
\bibitem{#1} {\frenchspacing#2}, \textrm{#3}, #4 (#5).}

\def\refjl#1#2#3#4#5#6#7{\vspace{-0.25cm}
\bibitem{#1} {\frenchspacing#2}, {\rm#6}, 
\textit{\frenchspacing#3}, \textbf{#4} (#7) #5.}

\def\refbk#1#2#3#4#5{\vspace{-0.25cm}
\bibitem{#1} {\frenchspacing#2}, \textit{#3}, #4, #5.}

\def\refcf#1#2#3#4#5#6{\vspace{-0.25cm}
\bibitem{#1} {\frenchspacing#2}, \textrm{#3},
in \textit{#4}, {\frenchspacing#5}, #6.}

\def\and{\mbox{\rm and}\ }


\end{document}